\documentclass[11pt]{article}
\usepackage[margin=1in]{geometry}
\usepackage{amsmath,amssymb,amsthm}
\usepackage{booktabs}
\usepackage[round]{natbib}
\setcitestyle{aysep={}}
\usepackage[hidelinks]{hyperref}

\newtheorem{proposition}{Proposition}
\newtheorem{lemma}{Lemma}
\theoremstyle{definition}
\newtheorem{definition}{Definition}

\title{Distinguishable Category Groupings in Complex Surveys: Which Categories Can the Design Tell Apart?}
\author{Jiarui Qi \and Yinhai Wang}
\date{}

\begin{document}
\maketitle

\begin{abstract}
Surveys often estimate the share of cases in each of several categories, such as the causes of an event or the reasons for a visit. Readers then compare the categories, but reliability rules usually check each estimate alone, not the differences between them. We ask instead which groups of categories can be told apart under the sampling design. The categories often fall into blocks, such as driver-related and vehicle-related causes. A grouping is called distinguishable when every pair of its groups in the same block differs by more than the design's minimum detectable difference, computed from the primary sampling
units at the design's degrees of freedom. The maximal distinguishable groupings are the most detailed and can differ in size, while merging two groups can destroy distinguishability. So even if splitting any one group in two breaks distinguishability, a more detailed distinguishable grouping may still exist. We propose two methods. One merges categories until the grouping is distinguishable, then tests every single split. The other lists every distinguishable grouping of a block with few categories. When groups in different blocks must also differ, there might be no grouping that qualifies. An application to the National Motor Vehicle Crash
Causation Survey, with 24 primary sampling units in 12 strata, gives the grouping maxima for two blocks of its categories. Of more than 27 million groupings of its 13 vehicle-related critical
reasons, none with more than four groups is distinguishable, and its environment-related reasons resolve into
at most four groups. Merging two groups of a distinguishable vehicle grouping destroys distinguishability in
5{,}239 of 31{,}773 cases. If groups from different blocks must also be told apart, no grouping qualifies.
Analysts can apply these methods before publishing categorical comparisons from designs with few degrees of
freedom.
\end{abstract}

\paragraph{Statement of Significance.}
Official statistics and research surveys publish estimates of the share of cases in each category, and readers treat differences between those categories as findings. With few sampling units, a survey may not be able to tell many of them apart, even if each estimate looks reliable. This article defines the set of most detailed groupings
of categories that a survey's design can distinguish, and shows why one number of groups cannot summarize it. It
also gives methods to find that set, one of them exact for short lists of categories. Another, a quick scan, applies
when groups from different parts of the list must also differ, and it can prove that no grouping works. Applied to a national crash investigation survey,
the methods show that thirteen vehicle-related crash reasons can be told apart in at most four groups. When groups
from different parts of the list must also differ, no grouping of all the survey's crash reasons works. The
same check can be run on any survey table before categorical comparisons are published.

\medskip
\noindent\textbf{Keywords:} Complex surveys; Degrees of freedom; Merging categories; Minimum detectable difference; Survey taxonomies.

\section{Introduction}

Survey reports often sort cases into the categories of a taxonomy, for example, crashes by cause or medical visits
by reason. Readers rank whole groups of reasons based on which is more common. Whether the survey supports such a comparison depends on its design. A design with few
primary sampling units (PSUs) supports fewer distinctions than its sample size suggests, because design-based variances are computed from the PSU totals and carry about as many degrees of freedom as there are PSUs minus strata
\citep{KornGraubard1999}.

Existing rules answer narrower questions. Data presentation standards check one estimate at a time to decide whether it is reliable enough to publish \citep{Parker2017,Parker2023}. They say nothing about which categories differ.
Tree and merging procedures form a single grouping by successive tests \citep{Kass1980,Opsomer2025}, and
sequential aggregation merges categories by information loss instead \citep{Jackson2008}. Procedures for
comparisons and rankings test which estimates differ \citep{Griffiths1992,Kott2023} or sort them into report card grades \citep{Kline2024}. A joint confidence region for a whole ranking can be built from the estimates' confidence
intervals, with no knowledge of the sampling design \citep{Klein2020}. Simultaneous intervals can also split a
ranking into its finest ordered groups, although no group is re-estimated as one merged category \citep{Rising2023}. Partition searches fix the number of groups and optimize an association criterion
\citep{Carrizosa2023}, or take the largest number of groups whose worst pair still differs \citep{Eo2013}. Work on official statistics also weighs geographic detail against out-of-sample error \citep{Dunn2026}. None of this work gives the set of partitions of a taxonomy that the design itself can distinguish.

In this article, we define that set, the maximal distinguishable groupings, for a stratified multistage design and a taxonomy whose categories may be merged only within fixed blocks, such as the driver-related or the vehicle-related reasons. The criterion is design-based and uses the design's degrees of freedom. The construct is a set of incomparable
partitions rather than a single number of groups. We also show that merging two groups of a distinguishable grouping can leave a grouping that is not distinguishable. So even if splitting any one group in two breaks distinguishability, a more detailed distinguishable grouping may still exist. For blocks with few categories, listing every distinguishable grouping gives the maximal ones exactly. When all pairs of groups must differ, a subset scan can prove
that no partition qualifies. To our knowledge, no existing method defines or
computes the most detailed partitions that a survey design can distinguish under a design-based criterion. Section 4
applies the construction to a national crash investigation survey, and the Supplementary Material gives all
proofs and algorithms.

\section{The Construct}

\subsection{Contrasts at the design's degrees of freedom}

Consider a sample of PSUs drawn in $H$ strata, $n_h$ in stratum $h$, with $n=\sum_h n_h$ and design weights
$w_i$. Every unit $i$ of the frame falls in at most one category of a taxonomy $C=\{1,\dots,m\}$. Let $\hat X$
be the weighted total of the frame and $\hat Y_k$ the weighted total of category $k$, so that
$\hat p_k=\hat Y_k/\hat X$. A cell, or group, is a set of categories $A\subseteq C$, with share
$\hat p_A=\sum_{k\in A}\hat p_k$. For disjoint cells $A$ and $B$, the contrast is
$\hat\Delta(A,B)=\hat p_A-\hat p_B$. Its Taylor linearization variance $\hat v(A,B)$ is computed from the PSU
totals of the linearized variable \citep{Binder1983}, as set out in Supplement S1. Because the contrast has the weighted total of the whole frame as its denominator, every PSU enters its variance, whether or not it holds units of either cell. As a result, each contrast
carries $\nu=n-H$ degrees of freedom, with no domain adjustment. A merged cell's variance is recomputed
from the PSU totals, never assembled from its members' variances alone, since it also carries their
covariances.

\subsection{The criterion}

Two cells are \emph{distinguishable} when the observed gap exceeds the minimum detectable difference (MDD),
\begin{equation}
|\hat\Delta(A,B)| \;>\; \bigl(t_{\nu,1-\alpha/2}+t_{\nu,1-\beta}\bigr)\,\widehat{\mathrm{SE}}(A,B),
\label{eq:mdd}
\end{equation}
that is, when the separation ratio $r(A,B)=|\hat\Delta|/\widehat{\mathrm{SE}}$ exceeds
$c=t_{\nu,1-\alpha/2}+t_{\nu,1-\beta}$. A pair with $\widehat{\mathrm{SE}}=0$ is not distinguishable. With a
two-sided $\alpha=0.05$, power $1-\beta=0.80$ and $\nu=12$, $c=2.1788+0.8726=3.0514$. A Wald test
($c=t_{\nu,1-\alpha/2}$) and its Benjamini--Hochberg adjustment \citep{BenjaminiHochberg1995} give two alternative criteria. The MDD criterion implies the Wald criterion pair by pair, so every partition it accepts is accepted by the Wald
test as well, and the largest number of cells is never higher under the MDD criterion (Supplement S2).

Algebraically, (\ref{eq:mdd}) is a Wald test at a stricter level, about 0.01 two-sided when $\nu=12$ and the power is 0.80. However, a pair below its minimum detectable difference is one this design could not be expected to tell apart at the stated power \citep[compare][]{SinghWestlake2005}. That is a statement about the design, not evidence that the two shares are equal. Both quantiles in $c$ come from the $t$ distribution on $\nu$ degrees of freedom rather than from the normal distribution, since the standard error is itself estimated.

\subsection{Maximal distinguishable groupings}

The categories usually fall into blocks $G_1,\dots,G_L$ set by the taxonomy, and a merged cell must stay inside one block. Some categories, such as a code for an undetermined cause, may not be merged at all.

\begin{definition}
A partition $\pi$ of $C$ is \emph{admissible} when each cell lies in one block and every unmergeable category is a
singleton. An admissible $\pi$ is \emph{distinguishable} under scope W (within-block pairs) when every pair of its cells in the same block is distinguishable, and under scope A (all pairs) when every pair of its cells is. A partition $\pi$ \emph{refines} $\sigma$, and $\sigma$ is a \emph{coarsening} of $\pi$, when each cell of $\pi$ lies inside a cell of $\sigma$. A distinguishable $\pi$ is \emph{maximal} when no strict
refinement of $\pi$ is distinguishable. A partition is also called a \emph{grouping}, and $K(\pi)$ is its number of cells.
\end{definition}

Two distinct maximal partitions never refine one another. We write $K_{\max}$ for the largest number of cells in a distinguishable partition. For one block, $K_{\max}$ is taken over the partitions of that block. It describes only the maximal partitions with that many cells.

\section{Properties and Algorithms}

\begin{lemma}\label{lem:local}
Whether two cells are distinguishable depends on those two cells alone.
\end{lemma}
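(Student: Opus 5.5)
The plan is to show that the separation ratio $r(A,B)$ and the threshold $c$ are functions of the data and of the pair $(A,B)$ only. Nothing else about the partition containing $A$ and $B$ should enter. The argument unpacks each ingredient of criterion (\ref{eq:mdd}) in turn.

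\textbf{The threshold.} The constant $c=t_{\nu,1-\alpha/2}+t_{\nu,1-\beta}$ depends only on $\alpha$, $\beta$ and $\nu=n-H$. As noted in Section 2.1, $\nu$ is fixed by the design and carries no domain adjustment. So $c$ does not depend on which cells are compared, nor on any other cell of the partition.

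\textbf{The contrast.} $\hat\Delta(A,B)=\hat p_A-\hat p_B=\sum_{k\in A}\hat p_k-\sum_{k\in B}\hat p_k$. Here each $\hat p_k=\hat Y_k/\hat X$ uses the weighted total $\hat X$ of the whole frame. That total is the same number for every partition. Hence $\hat\Delta(A,B)$ is determined by the sets $A$ and $B$.

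\textbf{The variance.} This step needs the most care. I would write the linearized variable from Supplement S1 for the ratio contrast as
\[
z_i=\bigl(\mathbf 1\{i\in A\}-\mathbf 1\{i\in B\}-\hat\Delta(A,B)\bigr)/\hat X ,
\]
that is, the Taylor linearization of $(\hat Y_A-\hat Y_B)/\hat X$. Form the PSU totals $z_{hj}=\sum_{i\in hj}w_iz_i$. Then $\hat v(A,B)$ is the usual stratified between-PSU variance of the $z_{hj}$. Every quantity here is built from the unit indicators of $A$ and $B$, the weights, $\hat X$, and $\hat\Delta(A,B)$. None involves membership in any third cell, or how the rest of the block is partitioned.

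The obstacle is this last point, so I would emphasize the remark in Section 2.1. Merged cells' variances are recomputed from PSU totals, not assembled from member variances, and so they do not depend on the history of merges. Once the pair's own covariance is included through the recomputation, no cross-cell covariance with other cells appears. Consequently $\widehat{\mathrm{SE}}(A,B)=\hat v(A,B)^{1/2}$ is a function of $(A,B)$ alone.

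\textbf{Conclusion.} Whether $r(A,B)>c$ holds, including the convention that $\widehat{\mathrm{SE}}=0$ means not distinguishable, is therefore a property of the pair. It is the same in every partition containing both $A$ and $B$. In particular, distinguishability of a partition under scope W or A is the conjunction of pairwise conditions on its cells.
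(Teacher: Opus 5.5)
Your proposal is correct and follows essentially the same route as the paper. The paper's proof notes that $\hat\Delta(A,B)$ and $\hat v(A,B)$ depend only on the PSU totals of the indicators of $A$ and $B$ and on the frame's PSU totals, which sum to the partition-independent $\hat X$, and that $\nu$, and hence $c$, is fixed by the design.
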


The contrast and its variance are functions of the two cells' PSU totals and of the frame's PSU totals, and none
of these changes with the rest of the partition. Under scope W the lemma splits the problem by block. The cells of a partition that lie in one block form its factor in that block. A partition
is distinguishable if and only if each block's factor is, and maximal if and only if every factor is maximal in
its block (Supplement S3). The maximal partitions are therefore exactly those formed by choosing one maximal factor in each block, and the numbers of cells add across blocks.

\subsection{Coarsening can destroy distinguishability}

\begin{proposition}\label{prop:nonmono}
A distinguishable partition can have an admissible coarsening that is
not distinguishable.
\end{proposition}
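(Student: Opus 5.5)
This is an existence statement, so I will prove it with an explicit example. The plan is to exhibit a design, a single block, and a distinguishable partition whose merge of two cells fails the criterion (\ref{eq:mdd}). The mechanism is that merging two cells with shares on opposite sides of a third pulls the merged share toward the third. Take one block with three mergeable categories $1,2,3$ and estimated shares $\hat p_1$ low, $\hat p_2$ middle, $\hat p_3$ high, chosen so that $\hat p_{\{1,3\}}$ lands close to $\hat p_2$. Then $\pi=\{\{1\},\{2\},\{3\}\}$ has every pair well separated. The admissible coarsening $\sigma=\{\{1,3\},\{2\}\}$ has the single pair $(\{1,3\},\{2\})$, whose contrast $\hat p_1+\hat p_3-\hat p_2$ can be made exactly $0$. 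By the criterion it is then not distinguishable whatever its standard error (when $\widehat{\mathrm{SE}}=0$, the pair is non-distinguishable by convention).

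First I fix a small concrete design so that every quantity can be computed from PSU totals, as in Supplement S1. I would use $H$ strata with two PSUs each, so $\nu=H$, and attach a fourth, unmergeable or separate-block category (or uncategorized frame units) to absorb the remaining mass. Second, I choose shares such as $\hat p_1=0.1$, $\hat p_2=0.4$ and $\hat p_3=0.3$, so that $\hat p_{\{1,3\}}=\hat p_2$. Third, I choose PSU-level totals whose overall weighted totals give these shares but whose linearized variables vary little across the two PSUs within each stratum. This makes the three pairwise standard errors in $\pi$ small enough that every ratio $r$ exceeds $c$. Lemma~\ref{lem:local} then says the three pairwise checks decide distinguishability of $\pi$ under scope W, and also under scope A if the remaining category is placed suitably or excluded.

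The main obstacle is the third step: making all three pairwise standard errors in $\pi$ small while keeping the shares fixed. The gaps are only $0.1$ to $0.3$, so each needs $\widehat{\mathrm{SE}}<0.1/c$. I would handle this by scaling: replicate the same per-PSU composition across many strata with tiny within-stratum perturbations. The ratio-estimator linearized scores then have within-stratum differences of order $\varepsilon$, so $\widehat{\mathrm{SE}}=O(\varepsilon)$ while the contrasts stay fixed. Taking $\varepsilon$ small makes every ratio in $\pi$ exceed $c$, and the zero contrast in $\sigma$ fails for any $\varepsilon$. An even simpler route is to make all PSU compositions exactly equal except for a small perturbation in the category-$2$ totals, and verify the three ratios numerically.
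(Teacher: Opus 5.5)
Your proof is correct, but it uses a different mechanism from the paper's. You make the merged share equal the share it is compared with, $\hat p_1+\hat p_3=\hat p_2$, so $\{1,3\}\,|\,\{2\}$ fails whatever its standard error. The only work is then making the three singleton ratios exceed $c$, which they do once the within-stratum variation is small, since each ratio grows like $1/\varepsilon$.

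The paper instead reuses the four-category example of Table~\ref{tab:toy}, built for Proposition~\ref{sprop:cert}(d). It has shares of 4, 5, 6 and 7 points and covariance matrix $0.2^2\mathbf I+2.8^2\mathbf 1\mathbf 1^\top$ from 12 two-PSU strata. There the failing pair $(b,cd)$ of $\{a\}\,|\,\{b\}\,|\,\{c,d\}$ has a gap of 8 points, larger than any singleton gap. It fails only because the common component enters contrasts between cells of different sizes, raising the standard error from about $0.28$ to $2.82$. The paper also cites the NMVCCS instance of Section 4.

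Your route is more elementary and needs no covariance structure or table of ratios. The paper's example shows the stronger fact that merging can destroy distinguishability even when it widens the gap, so the phenomenon is not only the gap shift. The same example also serves as the witness for part (d) of the certificate proposition.

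Two details need fixing.

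First, your verbal set-up, with $\hat p_2$ in the middle and categories $1$ and $3$ on opposite sides of it, cannot work. A merged share is a sum, not an average, so $\hat p_1<\hat p_2<\hat p_3$ forces $\hat p_1+\hat p_3>\hat p_2$. Your numbers $(0.1,0.4,0.3)$, with $\hat p_2$ largest, are the right picture.

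Second, $\widehat{\mathrm{SE}}=O(\varepsilon)$ is only half of what you need. Each singleton pair must also have $\widehat{\mathrm{SE}}>0$, since a pair with zero standard error is not distinguishable. Your simpler route breaks if the category-2 perturbation is offset by uncategorized units so that every PSU's frame total $X_{hj}$ stays fixed. The linearized PSU totals of the $(1,3)$ contrast are then constant within strata, and $\widehat{\mathrm{SE}}(\{1\},\{3\})=0$. Let $X_{hj}$ move with the perturbation instead.

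For example, take 12 strata of two PSUs each. Every PSU holds totals 1, 4 and 3 in categories 1, 2 and 3, plus 2 uncategorized units, except that the category-2 totals in stratum 1 are 5 and 3. Uncategorized units are allowed, since each unit lies in at most one category, and they leave a single block, so scopes W and A coincide. Then the singleton ratios are about 51.4, 120 and 13.3, all above $3.0514$, while $\{1,3\}\,|\,\{2\}$ has ratio $0$.
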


The share of a merged cell is the sum of its members' shares, and its variance includes their covariances.
Merging can move the merged share toward the share it is compared with. It can also raise the standard error,
for instance when the members' estimates are positively correlated across PSUs. Either change lowers the
separation ratio. Section 4 shows an instance in real data, and Supplement S3 constructs another. Shares should therefore be reported at a partition that is itself distinguishable, never at an unchecked coarsening of one.

\begin{proposition}\label{prop:cert}
Call a distinguishable $\pi$ \emph{certified} with cap $\bar m$ when no cell has more than $\bar m$ categories and no partition
obtained by splitting one cell of $\pi$ into two is distinguishable.
(a) A maximal partition whose cells have at most $\bar m$ categories is certified.
(b) If coarsening never destroys distinguishability among admissible partitions, every certified partition is maximal.
(c) Under scope W, a certified block factor with at least $K_{\max}-1$ cells is maximal whenever that block's
$K_{\max}$ is known exactly.
(d) Otherwise a certified partition need not be maximal.
\end{proposition}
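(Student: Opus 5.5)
Parts (a)--(c) follow from the definitions together with Lemma~\ref{lem:local}. Part (d) needs an explicit example, and I would build it from Proposition~\ref{prop:nonmono}.

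For (a), take a maximal $\pi$ whose cells have at most $\bar m$ categories. Splitting one cell of $\pi$ into two produces a strict refinement of $\pi$. By maximality that refinement is not distinguishable, so $\pi$ is certified.

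For (b), suppose $\pi$ is certified but not maximal. Then some strict refinement $\rho$ of $\pi$ is distinguishable. Some cell $A$ of $\pi$ is split by $\rho$ into pieces $A_1,\dots,A_j$ with $j\ge 2$. Form $\sigma$ from $\pi$ by replacing $A$ with the two cells $A_1$ and $A\setminus A_1$. Then $\sigma$ is admissible: $A_1$ and $A\setminus A_1$ lie in the block of $A$, and any unmergeable singleton is untouched. Also $\rho$ refines $\sigma$, so $\sigma$ is an admissible coarsening of the distinguishable $\rho$. Under the hypothesis of (b), $\sigma$ is therefore distinguishable. But $\sigma$ is a single-cell split of $\pi$, which contradicts certification. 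Hence $\pi$ is maximal.

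For (c), work inside one block under scope W; by Lemma~\ref{lem:local} and the factorization of Section 3, only that block's factor matters. Let the certified factor $\pi$ have $K\ge K_{\max}-1$ cells, and suppose a strict refinement $\rho$ of $\pi$ is distinguishable. Since $\rho$ is a strict refinement, $K(\rho)\ge K+1$. Since $\rho$ is distinguishable, $K(\rho)\le K_{\max}$. There are two cases.
\begin{itemize}
\item If $K=K_{\max}$, no strict refinement can be distinguishable, so $\pi$ is maximal.
\item If $K=K_{\max}-1$, then $K(\rho)=K+1$. So $\rho$ is obtained from $\pi$ by splitting exactly one cell into two, and certification forbids this.
\end{itemize}
The cap $\bar m$ plays no role in this step, since it limits cell sizes, not the split that was tested. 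Exact knowledge of $K_{\max}$ is what lets us rule out refinements with more cells.

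For (d), I need a certified $\pi$ with a distinguishable refinement $\rho$ that has at least two more cells. The refinement must split some cell of $\pi$ into three or more parts, or split two cells at once, and every intermediate single split must fail. The simplest plan uses one block, where $\pi$ is the one-cell partition $\{A\}$ with $A=\{1,2,3\}$, and $\rho=\{\{1\},\{2\},\{3\}\}$ is distinguishable. Certification then requires that each two-cell split $\{\{1\},\{2,3\}\}$, $\{\{2\},\{1,3\}\}$ and $\{\{3\},\{1,2\}\}$ fails. The one-cell $\pi$ is distinguishable vacuously.

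To achieve this, choose shares such as $\hat p=(0.1,0.3,0.5)$ with a remaining out-of-block mass. The pair $\{1\}$ versus $\{2,3\}$ then has contrast $0.7$, which is large, so the standard error must be inflated. Following the mechanism behind Proposition~\ref{prop:nonmono}, I would design PSU totals in which the merged sums have large PSU-level variance while the pairwise differences of singletons are nearly constant across PSUs. A concrete route is to add a common PSU-level shock $u_i$ to all three categories' linearized values. Differences of singletons cancel the shock, but a contrast of one category against a two-category cell keeps a net $-u_i$. Taking $u_i$ large makes every one-versus-two ratio fall below $c$, while the singleton ratios stay large.

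The main obstacle is realizing this with nonnegative weighted totals and the ratio linearization. I would verify it with a small explicit design, say $H=1$ with several PSUs, and compute the three singleton ratios and the three split ratios numerically. If that becomes awkward, a fallback is to exhibit the phenomenon from the Section 4 data.
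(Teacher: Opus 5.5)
Your arguments for (a)--(c) are the paper's, including the split $A_1\mid A\setminus A_1$ in (b), which is a coarsening of the distinguishable refinement. The gap is in (d). You give a plan rather than an example, and the instance you sketch cannot work. With $\hat p=(0.1,0.3,0.5)$, the split $\{1\}\mid\{2,3\}$ is distinguishable in every design in which the three singletons are distinguishable, so $\{A\}$ is never certified.

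To see this, let $e_k$ be the $k$th unit vector and write $(1,-1,-1)=\tfrac23\bigl[(e_1-e_2)+(e_1-e_3)\bigr]-\tfrac13\mathbf 1$. Since $a\mapsto(a^\top\hat{\mathbf V}a)^{1/2}$ is a seminorm, distinguishable singletons give $\widehat{\mathrm{SE}}(\{1\},\{2,3\})<\tfrac23(0.2+0.4)/c+\tfrac13\widehat{\mathrm{SE}}(\hat p_A)$. Let $x_{hj}$ be the PSU's share of $\hat X$. The linearized PSU totals of $\hat p_A=0.9$ lie in $[-0.9x_{hj},\,0.1x_{hj}]$ and sum to zero, so their positive and negative parts each sum to at most $0.1$. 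Centering only lowers the sum of squares and $n_h/(n_h-1)\le2$, so (\ref{eq:var}) gives $\widehat{\mathrm{SE}}(\hat p_A)\le0.2$. Hence $\widehat{\mathrm{SE}}(\{1\},\{2,3\})<0.198$, and the ratio exceeds $0.7/0.198>3.5>c$.

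The same algebra gives a general condition. When the singletons are distinguishable, the split separating the smallest category can fail only if $\hat p_A<c\,\widehat{\mathrm{SE}}(\hat p_A)$. The block's own share must lie within $c$ standard errors of zero, so the shares must be small and the common shock large relative to them. Your fallback does not close the gap either. Section 4 exhibits only a coarsening that destroys distinguishability. Part (b) makes that necessary for (d), but it does not give a partition whose one-cell splits all fail while a finer partition passes.

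What is missing is a realization by actual PSU totals, and the paper supplies one. Take 12 strata of two PSUs each, with the two PSUs of a stratum holding equal shares of the frame. The $\hat p_kX_{hj}$ terms then cancel within each stratum, and $\hat{\mathbf V}=\sum_h(\mathbf z_{h1}-\mathbf z_{h2})(\mathbf z_{h1}-\mathbf z_{h2})^\top$, where $\mathbf z_{hj}$ is the vector of the categories' linearized PSU totals. In stratum $k$, let the first PSU hold $0.2$ percentage points more of category $k$ alone. In one further stratum, let it hold $2.8$ points more of every category, and make the PSUs of the remaining strata identical. Then $\hat{\mathbf V}=0.2^2\mathbf I+2.8^2\mathbf 1\mathbf 1^\top$ and $c=3.0514$. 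The paper applies this to four categories with shares of 4, 5, 6 and 7 points. There $\{a,b\}\mid\{c,d\}$ is certified for any cap $\bar m\ge2$, while the singletons are distinguishable.

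Your three-category, one-cell version also works with these ingredients, though it needs $\bar m\ge3$. Shares of 4, 5 and 6 points give singleton ratios 3.536, 7.071 and 3.536. The one-versus-two ratios are 2.481, 1.772 and 1.063, all below $3.0514$.
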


Part (d) is possible because coarsening can destroy distinguishability (Proposition \ref{prop:nonmono}). As the four-category example of Supplement S3 shows, a distinguishable refinement can lie two splits away
while every single split fails. The certificate is cheap to run, and under the conditions of part (c) it proves that a block factor is maximal. Since maximal partitions often fall below $K_{\max}$ in practice (Section 4), the number of cells in one partition does not show $K_{\max}$.

\subsection{Finding maximal distinguishable groupings}

A greedy search starts from the all-singletons partition. While some mergeable pair in scope fails the criterion, it merges the failing pair with the smallest separation ratio and recomputes every affected contrast
from the PSU totals. It stops when no such pair fails. The search belongs to an established family of procedures that merge by successive tests
\citep{Kass1980,Fan2009}. A survey version of Kass's procedure runs its split tests under the design
\citep{Opsomer2025}.

The terminal partition of the search is then checked with the certificate of Proposition~\ref{prop:cert}. For every cell $A$ of at most $\bar m$ categories, each of its $2^{|A|-1}-1$ splits in two gives a refined partition, and every in-scope pair of cells in that partition is tested against the criterion. A surviving split is applied and the check restarts. A terminal partition with a cell above the cap is reported as uncertified and is not counted as maximal. The third step of
\citet{Kass1980} also re-splits a merged category, but it tests only the two parts against each other and
does not claim that its result is maximal.

For blocks with few categories, every distinguishable partition can be listed. For a block of $b$ categories, all $2^b$ cell totals can be
computed once. A walk visits each admissible partition of the block exactly once. At each step it takes the lowest-numbered category not yet in a cell and tries every admissible subset of the remaining categories that contains this category. The choices made up to any step form a branch of the walk.

\begin{proposition}\label{prop:enum}
Under scope W, abandoning a branch of this walk at its first indistinguishable pair loses no distinguishable
partition. The walk with this rule returns exactly the block's distinguishable partitions.
\end{proposition}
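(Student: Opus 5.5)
The proof has two parts. First, the walk without pruning visits every admissible partition of the block exactly once. Second, pruning never discards a branch that could finish as a distinguishable partition.

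For the first part, give each partition of the block a canonical construction. Order its cells by their lowest-numbered category. At each step the walk chooses the cell containing the smallest category not yet covered, and that cell must be a subset of the uncovered categories. A partition therefore corresponds to exactly one sequence of choices, so it is reached by exactly one root-to-leaf path, and distinct paths give distinct partitions. Admissibility is respected throughout, because the walk only offers admissible subsets. Each chosen cell lies in the block, and an unmergeable category is offered only as a singleton. So the leaves are exactly the admissible partitions of the block, each appearing once.

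The second part, pruning, is the substantive step, and it rests on Lemma~\ref{lem:local}. A branch is a set of pairwise disjoint cells already fixed. Every leaf below it is a partition that contains all of these cells. Suppose the branch contains a pair $A,B$ that fails the criterion (\ref{eq:mdd}). By Lemma~\ref{lem:local}, whether $A$ and $B$ are distinguishable depends only on $A$ and $B$: on their PSU totals and the frame's PSU totals. It does not depend on how the remaining categories are later grouped. So $A,B$ fails in every completion of the branch. Under scope W the two cells lie in the same block and are therefore in scope, so every leaf below the branch is indistinguishable. Abandoning the branch loses nothing.

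Conversely, the walk must keep every distinguishable partition. A distinguishable partition has no failing in-scope pair, so no prefix of its unique path contains one. The pruning rule never fires along that path, and the partition is returned. Every returned leaf has been checked on all its pairs, since each new cell is tested against the earlier ones when it is added. So every returned leaf is distinguishable.

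The main point to state carefully is the first-failure rule. The pair check must run for each new cell against all earlier cells before the walk descends. Then "first indistinguishable pair" is well defined, and a branch that is never pruned has had every pair tested. I would make explicit that this argument uses nothing monotone about coarsening, only locality. This matters because Proposition~\ref{prop:nonmono} rules out any pruning based on refinement order.

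Under scope W the whole argument stays inside one block. Restricting the walk to a single block is justified by the factorization stated after Lemma~\ref{lem:local}.
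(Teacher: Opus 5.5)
Your proof is correct and follows essentially the same route as the paper's. Each admissible partition is reached along exactly one path, because the only consistent choice at each step is its cell containing $\min U$; a failing pair among already-chosen cells persists in every completion by Lemma~\ref{lem:local} and is in scope under scope W; and every output is distinguishable because each new cell is tested against all earlier ones before descending. Your remarks that a distinguishable partition's unique path never triggers pruning, and that only locality is used and nothing monotone under coarsening, simply make explicit what the paper leaves implicit.
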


By Lemma \ref{lem:local} the failing pair fails in every partition that keeps the cells chosen so far, and scope W requires every within-block pair to be distinguishable. The walk thus yields the exact $K_{\max}$ and every maximal partition of the block, without visiting all of its partitions, whose count is the Bell number of $b$ \citep{Rota1964}. For the environment block of Section 4, which has 10{,}480{,}142{,}147 partitions, the walk needed 81{,}303{,}674 pair tests.

Under scope A, an unmergeable category can make the search unnecessary.

\begin{proposition}\label{prop:orphan}
Let $F$ be an unmergeable category. Under scope A, if some other category $k$ lies in no admissible cell that is
distinguishable from $\{F\}$, then no admissible partition is distinguishable.
\end{proposition}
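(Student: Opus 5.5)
The proof is a direct contradiction argument resting on two facts. Every admissible partition must contain $\{F\}$ as a cell. Scope A then forces every other cell to be distinguishable from $\{F\}$.

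Suppose, for contradiction, that some admissible partition $\pi$ is distinguishable under scope A. By the definition of admissibility, the unmergeable category $F$ must be a singleton, so $\{F\}\in\pi$. Since $\pi$ is a partition of $C$ and $k\neq F$, the category $k$ lies in exactly one cell $A\in\pi$, and $A\neq\{F\}$. Admissibility of $\pi$ means every cell lies in one block and respects the unmergeable singletons. Hence $A$ is an admissible cell: it is a cell that can occur in some admissible partition, namely $\pi$ itself.

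Scope A requires every pair of cells of $\pi$ to be distinguishable, in particular the pair $(A,\{F\})$. By Lemma~\ref{lem:local}, whether $A$ and $\{F\}$ are distinguishable depends on these two cells alone, through their PSU totals and the frame's PSU totals. So it does not matter that the hypothesis is stated about cells in isolation, not within a particular partition. The hypothesis says that no admissible cell containing $k$ is distinguishable from $\{F\}$. Since $A$ is such a cell, the pair fails the criterion (\ref{eq:mdd}) within $\pi$, which contradicts the distinguishability of $\pi$. Hence no admissible partition is distinguishable.

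The only step needing care is pinning down what ``admissible cell'' means in the statement. I will read it as a subset of one block containing no unmergeable category other than as a singleton. I will then check that every cell of an admissible partition meets this description, which is immediate from the definition. The rest is bookkeeping. It is also worth remarking that the check is a finite scan over the subsets of $k$'s block that contain $k$. Each check reuses precomputed cell totals, which is what makes the subset scan a proof of infeasibility.
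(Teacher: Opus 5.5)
Your proof is correct and matches the paper's: every admissible partition contains the cell $\{F\}$ and some cell $S \ni k$ inside $k$'s block, scope A requires $S$ to be distinguishable from $\{F\}$, and the hypothesis says no such $S$ exists. Your explicit appeal to the locality lemma, to apply a condition stated for cells in isolation inside a particular partition, is a step the paper leaves implicit, and it does no harm.
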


Call such a $k$ an orphan. Every admissible partition contains the cell $\{F\}$ and a cell holding $k$, and that
cell is a subset of $k$'s block. Scanning the subsets of that block which contain $k$ therefore rules out every admissible partition at once.

\section{Illustration: A Crash Investigation Survey}

The National Motor Vehicle Crash Causation Survey (NMVCCS) of the National Highway Traffic Safety
Administration (NHTSA) sent researchers to crash scenes, and its nationally representative sample covers crashes
from July 3, 2005, to December 31, 2007 \citep{NHTSA2008a}. A crash qualified when emergency medical services
were dispatched and a light passenger vehicle was towed because of damage, among other conditions. Researchers
were on duty between 6 a.m. and midnight, so the frame covers crashes in those hours only. The NMVCCS analytical user's manual counts 6{,}949 investigated crashes, of which 5{,}470
have positive case weights that sum to 2{,}188{,}970 crashes \citep{NHTSA2008b}. The design has 24 PSUs in 12 strata
\citep{Choi2008}, so $\nu=12$.

The critical reason is the immediate reason for the event that made the crash imminent \citep{NHTSA2008a}. It is coded on one vehicle of a crash, and occasionally on two. Our crash-level category is
that vehicle's code, taken from the lower-numbered vehicle when two vehicles carry one. The 59 categories
observed at crash level fall in four blocks: 29 driver-related, 13 vehicle-related
and 16 environment-related reasons, plus one code for an undetermined reason that is never merged. We apply
criterion (\ref{eq:mdd}) with $\alpha=0.05$ and power 0.80 ($c=3.0514$), under scope W unless stated otherwise.

\begin{table}[t]
\centering
\caption{Results of listing every distinguishable partition of the vehicle and environment blocks (MDD criterion, power 0.80, scope W).}
\label{tab:census}
\begin{tabular}{lrr}
\toprule
 & Vehicle & Environment \\
\midrule
Categories & 13 & 16 \\
Partitions of the block & 27{,}644{,}437 & 10{,}480{,}142{,}147 \\
Distinguishable partitions & 11{,}015 & 97{,}458 \\
\quad with 1 / 2 cells & 1 / 1{,}704 & 1 / 15{,}970 \\
\quad with 3 / 4 cells & 8{,}597 / 713 & 81{,}415 / 72 \\
\quad with 5 or more cells & 0 & 0 \\
$K_{\max}$ & 4 & 4 \\
One-step coarsenings of distinguishable partitions & 31{,}773 & 260{,}647 \\
\quad of which not distinguishable & 5{,}239 & 376 \\
\bottomrule
\end{tabular}
\end{table}

Table \ref{tab:census} counts every distinguishable partition of the vehicle and environment blocks. So no grouping of the
13 vehicle-related reasons into more than 4 groups is distinguishable under this design,
while 713 groupings into 4 are. The exact maximum is also 4 for the 16
environment-related reasons.

Coarsening often destroys distinguishability among the vehicle-related reasons. In the vehicle block the partition
\[
\{200, 202, 204, 205, 210, 211, 212, 213, 299\}\;|\;\{201, 206, 208\}\;|\;\{203\}
\]
is distinguishable, with separation ratios 4.139, 5.615 and 4.288. Merging its last two cells leaves one pair, which separates at 2.519, below
3.0514. Merging two cells of a partition gives a one-step coarsening. Of the 31{,}773 one-step coarsenings of distinguishable vehicle partitions, 5{,}239 are not distinguishable. They come from 4{,}262 of the 11{,}014 distinguishable partitions with at least two cells.

Maximal partitions below $K_{\max}$ are the rule. A distinguishable strict refinement of a three-cell vehicle partition must have four cells, and a four-cell partition coarsens into only six three-cell partitions. So at most six times 713 of the 8{,}597 distinguishable three-cell vehicle partitions have such a refinement. All the others are maximal at three cells, although the block's maximum is four
(Proposition~\ref{sprop:antichain} in the Supplement). In the environment block, at most six times 72 of the 81{,}415 distinguishable three-cell partitions are not maximal.

\begin{table}[t]
\centering\small
\caption{Orphan scan against the undetermined code, by criterion (scope A). The vehicle block has 8{,}191
nonempty subsets and the environment block 65{,}535. Passing: subsets distinguishable from the undetermined
code. Covered: categories lying in at least one passing subset. D1 is the MDD criterion (\ref{eq:mdd}), D2 the
Wald test and D3 its Benjamini--Hochberg adjustment at false discovery rate $q=0.05$ (Supplement S2).}
\label{tab:orphan}
\begin{tabular}{llrrp{5.2cm}}
\toprule
Criterion & Block & Passing & Covered & Orphans \\
\midrule
D1, power 0.80 & vehicle & 109 & 7 of 13 & 200, 202, 203, 206, 208, 210 \\
 & environment & 2{,}155 & 13 of 16 & 509, 522, 525 \\
D1, power 0.90 & vehicle & 0 & 0 of 13 & all 13 \\
 & environment & 0 & 0 of 16 & all 16 \\
D2 (Wald) & vehicle & 1{,}212 & 12 of 13 & 202 \\
 & environment & 22{,}518 & 15 of 16 & 509 \\
D3 (Benjamini--Hochberg) & vehicle & 689 & 12 of 13 & 202 \\
 & environment & 16{,}781 & 15 of 16 & 509 \\
\bottomrule
\end{tabular}
\end{table}

Scope A also requires every cell to be distinguishable from the undetermined code. Of the 8{,}191 nonempty
subsets of the vehicle block, 109 meet this requirement, and together they cover only
7 of its 13 categories. Categories 200, 202, 203, 206, 208, 210 lie in no such subset, so they are orphans. By
Proposition \ref{prop:orphan}, no admissible partition of the 59 categories is distinguishable under scope A.
The same holds at power 0.90 and under the Wald and Benjamini--Hochberg criteria (Table~\ref{tab:orphan}). The driver block has $7.13\times 10^{22}$ partitions, far too many for the listing method, but it never has to be examined. The
illustration thus settles all 59 categories under scope A, and it gives exact maxima for the vehicle and environment blocks (Table~\ref{tab:census}).

\section{Discussion}

Maximal distinguishable groupings turn a vague caution about small samples into a statement that can be checked. A table that
compares categories should do so at a partition the design can distinguish, and it should name that partition.
Before any search, a scan against an unmergeable category can show that no partition is distinguishable under scope A. Where a block has few enough categories, the exact maximum can be reported, and where it does not, the certificate can show that splitting any one group in two breaks distinguishability. None of these steps needs more than the PSU totals that design-based variance estimation
already uses, so the methods apply to any stratified multistage design.

The degrees of freedom of every contrast equal the number of variance PSUs minus the number of strata. By this count the Crash Report Sampling System, a current NHTSA survey, has 42 degrees of freedom on its full
design. We counted the variance PSUs and strata in its public files \citep{CRSSdata}, from the design variables that its user's manual describes \citep{NHTSA2024crss}.

Every input these methods need can be fixed in advance, so a reported partition cannot be tuned to the estimates. In the illustration, the criterion and its parameters, the block constraint, the
crash-level coding rule, the search and the certificate were all fixed before any partition was computed, as was the possibility that no partition is distinguishable under scope A. The complete listing for the vehicle block and the orphan scan were added after the planned analysis had run, and the listing for the environment block was declared before being run.

Models are a natural alternative. Small-area and hierarchical estimators borrow strength across categories and
can report more detailed groupings. The extra detail, however, comes from the model rather than from the design. The maximal distinguishable groupings
state which categorical comparisons the design supports on its own, and they give a benchmark for model-based claims about categories.

Our construction has limitations. First, the standard error in (\ref{eq:mdd}) is estimated, so the maximal distinguishable groupings share that uncertainty.
Second, the criterion is pairwise. Adjusting $\alpha$ for multiplicity, or using fewer effective degrees of freedom,
would only enlarge $c$, and a larger $c$ can never raise $K_{\max}$ (Supplement S2). In the illustration, $c$
already exceeds the cut that the Benjamini--Hochberg procedure implies for the Wald test over all category pairs.

\section*{Data Availability}
The NMVCCS and CRSS public-use files are available from NHTSA \citep{NMVCCSdata,CRSSdata}. The code for the analyses in Section 4
will be released with a later version of this article. The analysis of Section 4 was pre-registered, and Section 5 states which of its elements were fixed in advance.

\section*{Acknowledgments}
The authors used OpenAI ChatGPT to help edit the manuscript to improve the grammar and readability, assist verifying mathematical correctness, and check and debug analysis and .tex code. The authors reviewed and verified all output and take full responsibility for the content of this article.

\clearpage
\begin{center}
{\large\bfseries Supplementary Material}
\end{center}

\setcounter{section}{0}
\renewcommand{\thesection}{S\arabic{section}}
\setcounter{proposition}{0}
\renewcommand{\theproposition}{S\arabic{proposition}}
\setcounter{lemma}{0}
\renewcommand{\thelemma}{S\arabic{lemma}}
\setcounter{table}{0}
\renewcommand{\thetable}{S\arabic{table}}
\setcounter{equation}{0}
\renewcommand{\theequation}{S\arabic{equation}}
\renewcommand{\theHsection}{S\arabic{section}}
\renewcommand{\theHproposition}{S\arabic{proposition}}
\renewcommand{\theHlemma}{S\arabic{lemma}}
\renewcommand{\theHtable}{S\arabic{table}}
\renewcommand{\theHequation}{S\arabic{equation}}

\section{Variance and Degrees of Freedom}

Let $(h,j)$ index PSU $j$ of stratum $h$. For disjoint cells $A$ and $B$ and a unit $i$ of the frame, put
$u_i=\mathbf 1\{i\in A\}-\mathbf 1\{i\in B\}-\hat\Delta(A,B)$. The linearized variable of the ratio contrast is
$z_i=w_iu_i/\hat X$ \citep{Binder1983}, and its PSU totals are $z_{hj}=\sum_{i\in(h,j)}z_i$. Under the usual with-replacement
approximation for the first stage,
\begin{equation}
\hat v(A,B)=\sum_{h=1}^{H}\frac{n_h}{n_h-1}\sum_{j=1}^{n_h}\bigl(z_{hj}-\bar z_h\bigr)^2,\qquad
\bar z_h=\frac1{n_h}\sum_{j}z_{hj},
\label{eq:var}
\end{equation}
with $\nu=n-H$ degrees of freedom \citep{KornGraubard1999}. A unit in neither cell has
$u_i=-\hat\Delta(A,B)$, so every PSU with frame units enters (\ref{eq:var}), whether or not it holds units of $A$ or
$B$. The contrast is a ratio of totals over the whole frame, not a domain estimate, and no domain reduction of
$\nu$ applies. Write $\mathbf 1_A\in\{0,1\}^m$ for the indicator of the categories of $A$, and let $z_i^{(k)}=w_i(\mathbf 1\{i\in k\}-\hat p_k)/\hat X$ be the linearized variable of $\hat p_k$. Then $z_i=\sum_k(\mathbf 1_A-\mathbf 1_B)_k\,z_i^{(k)}$, so
$\hat v(A,B)=(\mathbf 1_A-\mathbf 1_B)^\top\hat{\mathbf V}(\mathbf 1_A-\mathbf 1_B)$, where $\hat{\mathbf V}$ is the estimated covariance matrix of the share estimators, computed by applying (\ref{eq:var}) to the vectors $(z_i^{(1)},\dots,z_i^{(m)})$. A merged cell's variance therefore includes its members' covariances and must be recomputed from the PSU totals rather than from the members' variances.

\section{The Three Criteria}

Write $r(A,B)=|\hat\Delta(A,B)|/\widehat{\mathrm{SE}}(A,B)$ and treat a pair with $\widehat{\mathrm{SE}}=0$ as not distinguishable.
\begin{itemize}
\item[D1] (minimum detectable difference) $r>c_1=t_{\nu,1-\alpha/2}+t_{\nu,1-\beta}$.
\item[D2] (Wald) $r>c_2=t_{\nu,1-\alpha/2}$.
\item[D3] (Benjamini--Hochberg) at the category level, the Benjamini--Hochberg procedure is applied to D2's two-sided $t_\nu$ tests over the family of all category pairs, and a pair passes when its adjusted $p$-value is at most $q\le\alpha$ ($q=0.05$ in the illustration). At the cell level, D3 uses the fixed cut $c_3$ on $r$ that the procedure implies for that family.
\end{itemize}

\begin{proposition}\label{sprop:nest}
For $\beta<1/2$, $c_1>c_2$, so D1 implies D2 for every pair of cells. D1 at power 0.90 implies D1 at power 0.80.
Hence every partition distinguishable under the stricter criterion is distinguishable under the looser one, and
$K_{\max}$ can only fall as the criterion tightens.
\end{proposition}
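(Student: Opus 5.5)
The argument reduces everything to a comparison of cutoffs on the separation ratio $r(A,B)$. Each criterion D1--D3 accepts a pair exactly when $r$ exceeds a fixed constant. Zero-SE pairs are excluded by all three criteria, so they never cause a discrepancy.

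For the first claim, $c_1-c_2=t_{\nu,1-\beta}$. Since the $t_\nu$ distribution is continuous and symmetric about zero, its quantile function is strictly increasing with $t_{\nu,1/2}=0$. Hence $\beta<1/2$ gives $1-\beta>1/2$, so $t_{\nu,1-\beta}>0$ and $c_1>c_2$. Any pair with $r>c_1$ therefore has $r>c_2$, which means D1 implies D2 for every pair of cells.

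For the power comparison, power 0.90 means $\beta=0.10$ and power 0.80 means $\beta=0.20$. Strict monotonicity of the quantile gives $t_{\nu,0.90}>t_{\nu,0.80}$. The term $t_{\nu,1-\alpha/2}$ is unchanged, so the power-0.90 cutoff exceeds the power-0.80 cutoff, and the same pairwise implication follows.

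Next I lift the pairwise implication to partitions. Distinguishability of an admissible $\pi$ under either scope is a conjunction over the in-scope pairs of its cells. Admissibility and scope do not depend on the criterion. By Lemma~\ref{lem:local}, each pair's status depends only on that pair. So if every in-scope pair passes the stricter criterion, every such pair passes the looser one, and the set of partitions distinguishable under the stricter criterion is contained in the set under the looser one.

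$K_{\max}$ is a maximum of $K(\pi)$ over that set. A maximum over a subset cannot exceed the maximum over the superset, so $K_{\max}$ can only fall as the criterion tightens. For scope A the set may be empty. I handle this by convention: I would define $K_{\max}$ as $0$ or $-\infty$ in that case, and the inequality still holds.

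I do not expect a real obstacle. The one point needing care is D3, which the statement does not cover directly. I would say so explicitly, and note that D3 fits the same argument whenever $c_3$ is fixed in advance; the comparison $c_1\ge c_3$ in the illustration is an empirical fact, not part of this proposition.
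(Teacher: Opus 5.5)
Your proposal is correct and follows the paper's proof. Positivity and monotonicity of $t_{\nu,1-\beta}$ order the cuts, and because distinguishability of a partition is a conjunction over its in-scope pairs, the pairwise implication carries over to partitions and then to $K_{\max}$; your handling of zero-SE pairs and of a possibly empty set under scope A is fine, though you do not need Lemma~\ref{lem:local}, since each ratio $r$ is a fixed number once the partition is fixed.
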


\begin{proof}
The quantile $t_{\nu,1-\beta}$ is positive when $\beta<1/2$ and increases with $1-\beta$. A partition is distinguishable exactly when every in-scope pair of its cells is distinguishable, and the stricter criterion implies the looser one pair by pair.
\end{proof}

\begin{proposition}\label{sprop:bh}
At the category level D3 implies D2, since an adjusted $p$-value is never below the raw one. At the cell level D1
implies D3 exactly when $c_3\le c_1$. This holds in the illustration, where $c_3=2.436<3.0514=c_1$, but not in general, because with few rejections $c_3$ approaches a Bonferroni cut, which can exceed $c_1$.
\end{proposition}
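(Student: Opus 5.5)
The plan is to work at the level of individual category pairs first and then lift the result to cells and partitions, using the same pair-by-pair argument as in the proof of Proposition~\ref{sprop:nest}.

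\emph{Category level: D3 implies D2.} Write $p_{(1)}\le\dots\le p_{(M)}$ for the ordered raw two-sided $t_\nu$ $p$-values over the $M$ category pairs. The Benjamini--Hochberg adjusted value is
\[
\tilde p_{(i)}=\min_{j\ge i}\min\bigl\{1,\,Mp_{(j)}/j\bigr\}.
\]
Each term in the minimum is at least $p_{(i)}$: for $j\ge i$ we have $Mp_{(j)}/j\ge p_{(j)}\ge p_{(i)}$, because $M/j\ge1$. Hence $\tilde p\ge p$ for every pair. So $\tilde p\le q\le\alpha$ forces $p\le\alpha$. Since the two-sided $p$-value is a decreasing function of $r$, the condition $p\le\alpha$ is the Wald condition $r\ge c_2$. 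One caveat is the boundary: D2 uses the strict inequality $r>c_2$. Equality $r=c_2$ gives $p=\alpha$, and this can only satisfy $\tilde p\le q$ when $\tilde p=p$ and $q=\alpha$. I would either note this measure-zero tie or treat the cut as $\ge$ consistently for both criteria.

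\emph{Cell level: D1 implies D3 exactly when $c_3\le c_1$.} At the cell level, D3 is the fixed rule $r>c_3$ and D1 is $r>c_1$.
\begin{itemize}
\item If $c_3\le c_1$, then $r>c_1$ gives $r>c_3$ for every pair, and the partition-level conclusion follows as in Proposition~\ref{sprop:nest}.
\item If $c_3>c_1$, any pair with $c_1<r\le c_3$ passes D1 but fails D3. Such a pair exists in principle, since $r$ ranges over all nonnegative values as the data vary.
\end{itemize}
The "exactly when" is therefore read as a statement about the cuts, for arbitrary data. In the illustration the check is numerical: report $c_3=2.436$ against $c_1=3.0514$.

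\emph{Why $c_3\le c_1$ can fail in general.} The main obstacle is justifying that $c_3$ approaches a Bonferroni cut with few rejections and that this cut can exceed $c_1$. The implied cut is the $r$-threshold of the largest rejected raw $p$-value, namely
\[
p_{(k)}\le kq/M,\qquad c_3=t_{\nu,1-kq/(2M)}.
\]
With $k=1$ this is the Bonferroni cut $t_{\nu,1-q/(2M)}$.

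To show it can exceed $c_1$, I would exhibit a concrete case: $q=\alpha=0.05$, $\nu=12$, power $0.80$, and $M=\binom{59}{2}=1711$. The Bonferroni cut is then $t_{12,1-0.05/3422}$. This is roughly a $t$ quantile at $1.5\times10^{-5}$, around $6$, which far exceeds $3.0514$.

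More abstractly, $t_{\nu,1-q/(2M)}\to\infty$ as $M\to\infty$ while $c_1$ stays fixed. So for any fixed $\nu$ and $\beta$, some family size makes $c_3>c_1$ when $k=1$. Combining this with the converse in the previous paragraph completes the argument.
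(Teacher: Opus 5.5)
Your proposal is correct and follows the paper's own justification, which is given inline in the proposition: Benjamini--Hochberg adjusted $p$-values are never below the raw ones, the cell-level comparison reduces to comparing the two fixed cuts on $r$, and when the procedure rejects a single pair the implied cut is the Bonferroni cut $t_{\nu,1-q/(2M)}$, which grows without bound in $M$ while $c_1$ stays fixed. You only add details the paper leaves implicit: the step-up formula, the tie at $r=c_2$, and a numerical instance, where for $\nu=12$ and $M=1711$ the Bonferroni cut is in fact about $6.5$ rather than $6$, which only strengthens the point.
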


\section{Structural Results and Proofs}

\begin{lemma}[Lemma~\ref{lem:local} of the main text]\label{slem:local}
Whether cells $A$ and $B$ are distinguishable depends on $A$ and $B$ alone.
\end{lemma}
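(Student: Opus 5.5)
The plan is to write the separation ratio $r(A,B)$ explicitly as a function of data determined by $A$ and $B$ only, and then read off the claim. The criterion (\ref{eq:mdd}) compares $r(A,B)=|\hat\Delta(A,B)|/\widehat{\mathrm{SE}}(A,B)$ with a constant $c$. The constant depends only on $\nu=n-H$, $\alpha$ and $\beta$, which are fixed by the design and not by the partition. So it suffices to show that both $\hat\Delta(A,B)$ and $\hat v(A,B)$ are functions of $A$ and $B$ alone, given the sample. The convention that $\widehat{\mathrm{SE}}=0$ means not distinguishable is also a statement about $\hat v(A,B)$ alone, so it needs no separate treatment.

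First, the contrast. By definition, $\hat\Delta(A,B)=\hat p_A-\hat p_B=\sum_{k\in A}\hat p_k-\sum_{k\in B}\hat p_k$. Each $\hat p_k=\hat Y_k/\hat X$ is computed from the weights and category memberships of the sample units. The denominator $\hat X$ is the weighted total of the whole frame, so it does not change with how the other categories are grouped. Hence $\hat\Delta(A,B)=(\mathbf 1_A-\mathbf 1_B)^\top\hat{\mathbf p}$ depends on the partition only through $A$ and $B$.

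Second, the variance. By Supplement S1, the linearized variable $z_i=w_iu_i/\hat X$ has $u_i=\mathbf 1\{i\in A\}-\mathbf 1\{i\in B\}-\hat\Delta(A,B)$. This involves only $A$, $B$, the already-handled $\hat\Delta(A,B)$, and the fixed quantities $w_i$ and $\hat X$. The PSU totals $z_{hj}$ and formula (\ref{eq:var}) then depend only on these $z_i$ and on the fixed design ($H$, $n_h$, PSU membership). Equivalently, S1 gives $\hat v(A,B)=(\mathbf 1_A-\mathbf 1_B)^\top\hat{\mathbf V}(\mathbf 1_A-\mathbf 1_B)$. Here $\hat{\mathbf V}$ is the covariance matrix of the category shares, computed once from the category-level linearized variables. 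It does not refer to any partition, and this quadratic form is the cleanest way to state the conclusion.

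Combining the two steps, $r(A,B)$, and hence the truth of $r(A,B)>c$, is a function of the pair $(\mathbf 1_A,\mathbf 1_B)$ and the fixed sample and design. The rest of the partition, meaning how the categories outside $A\cup B$ are grouped, never enters. The main point to watch is the one the paper stresses: a merged cell's variance must not be assembled from its members' variances, but recomputed. The quadratic form already handles this, because the covariances are built into $\hat{\mathbf V}$. The only other subtlety is that other cells do enter every variance, through the $-\hat\Delta$ term for units outside $A\cup B$. That term depends on $\hat\Delta(A,B)$ and not on how those units are grouped, so locality still holds.
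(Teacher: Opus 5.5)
Your proof is correct and follows essentially the paper's argument. Both reduce the claim to the fact that $\hat\Delta(A,B)$ and $\hat v(A,B)$ are functions only of the PSU totals of the indicators of $A$ and $B$ and of the frame total $\hat X$, while the cut $c$ depends only on the design-fixed $\nu$ and the chosen $\alpha,\beta$; your quadratic-form restatement $(\mathbf 1_A-\mathbf 1_B)^\top\hat{\mathbf V}(\mathbf 1_A-\mathbf 1_B)$ and your handling of the $\widehat{\mathrm{SE}}=0$ convention are careful elaborations of the same point rather than a different route.
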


\begin{proof}
By (\ref{eq:var}), $\hat\Delta(A,B)$ and $\hat v(A,B)$ are functions of the PSU totals of the indicators of $A$ and
$B$ and of the frame's PSU totals, which sum to $\hat X$. None of these depends on how the remaining categories are grouped, and $\nu$
is fixed by the design.
\end{proof}

Lemma \ref{slem:local} would not hold if the contrasts were normalized by a quantity that depends on the partition, for instance a denominator restricted to the categories in play.

\begin{proposition}[blocks]\label{sprop:blocks}
Under scope W, suppose merges are confined to the same blocks that define the scope. (i) A partition is
distinguishable if and only if each of its block factors is distinguishable in its block. (ii) Its number of cells is the sum of its factors' numbers of cells. (iii) It is maximal if and only if every factor is maximal in its block.
\end{proposition}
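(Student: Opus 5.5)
The plan is to exploit the product structure that admissibility and scope W impose. First I would record that an admissible partition $\pi$ of $C$ is in bijection with a tuple $(\pi_1,\dots,\pi_L)$, where $\pi_\ell$ is the factor of $\pi$ in block $G_\ell$. Every cell lies in exactly one block, so each $\pi_\ell$ is a partition of $G_\ell$, and the cells of $\pi$ are the disjoint union of the cells of the factors. Conversely, any tuple of admissible block partitions (unmergeable categories kept as singletons) glues to an admissible partition of $C$. Part (ii) follows at once, because the cells of $\pi$ are partitioned by block membership, so $K(\pi)=\sum_\ell K(\pi_\ell)$.

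For (i), I would use that under scope W the in-scope pairs of $\pi$ are exactly the pairs of cells lying in a common block, that is, the union over $\ell$ of the pairs of cells of $\pi_\ell$. By Lemma~\ref{slem:local}, whether such a pair $(A,B)$ is distinguishable depends only on $A$ and $B$, not on whether we view them inside $\pi$ or inside $\pi_\ell$. Hence $\pi$ is distinguishable if and only if every within-block pair passes, which holds if and only if each $\pi_\ell$ is distinguishable in its block.

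For (iii), I would first show that refinement respects the factorization. If $\sigma$ refines $\pi$ and both are admissible, then each cell of $\sigma$ lies in a cell of $\pi$, hence in the same block, so $\sigma_\ell$ refines $\pi_\ell$ for every $\ell$. Moreover, $\sigma\neq\pi$ exactly when $\sigma_\ell\ne\pi_\ell$ for some $\ell$. Conversely, replacing a single factor $\pi_\ell$ by a refinement $\tau$ of it yields an admissible strict refinement of $\pi$. With this in hand, suppose $\pi$ is maximal and some factor $\pi_\ell$ has a distinguishable strict refinement $\tau$ in its block. Substituting $\tau$ for $\pi_\ell$ gives a strict refinement of $\pi$ whose factors are all distinguishable, so by (i) it is distinguishable, a contradiction. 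Now suppose instead that every factor is maximal and that $\sigma$ is a distinguishable strict refinement of $\pi$. Then some $\sigma_\ell$ strictly refines $\pi_\ell$, and by (i) $\sigma_\ell$ is distinguishable, contradicting maximality of $\pi_\ell$. Distinguishability of $\pi$ itself also comes from (i) in both directions.

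The only delicate step is the claim that restricting a distinguishable partition to a block preserves pairwise outcomes. That is exactly where Lemma~\ref{slem:local} is needed, together with the hypothesis that the merge blocks coincide with the scope blocks. I would flag that if these two sets of blocks differed, a within-scope pair could straddle two factors and the factorization in (i) would fail. The remaining bookkeeping, namely the gluing bijection and the preservation of admissibility, is routine.
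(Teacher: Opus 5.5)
Your proposal is correct and follows essentially the same route as the paper: (i) from the scope-W pairs being the disjoint union of the factors' pairs together with Lemma~\ref{slem:local}, (ii) as immediate, and (iii) by the same two contradiction arguments (a distinguishable strict refinement of $\pi$ must strictly refine some factor, and a non-maximal factor can be replaced by a distinguishable strict refinement). Your extra bookkeeping makes explicit what the paper leaves implicit: the gluing bijection, the fact that refinements of admissible partitions stay admissible, and distinguishability of $\pi$ itself via (i) in both directions of (iii).
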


\begin{proof}
(i) The pairs that scope W tests are the within-block pairs, which are the disjoint union of the factors' pairs. By Lemma \ref{slem:local}, whether each pair is distinguishable depends on that pair alone. (ii) is immediate. (iii) If every factor is maximal and
$\sigma$ is a distinguishable strict refinement, some factor of $\sigma$ strictly refines the corresponding factor
and is distinguishable by (i), a contradiction. If some factor is not maximal, replacing it by a distinguishable
strict refinement gives, by (i), a distinguishable strict refinement of the whole partition.
\end{proof}

Merging is allowed only within sets of categories called merge sets, each inside one block. The decomposition of Proposition~\ref{sprop:blocks} does not extend to merge sets smaller than the blocks of the scope. Cells of different merge sets within one block are compared but can never be merged, so choosing one distinguishable partition of each merge set need not give a distinguishable partition of the block.

\begin{proposition}[the search]\label{sprop:search}
The greedy search of Section 3.2 stops after at most $m-1$ merges and is deterministic once ties are broken in a
fixed order. At its end every in-scope pair of cells that could be merged is distinguishable. When every in-scope pair of cells is mergeable, as when the merge sets coincide with the scope's blocks and
each unmergeable category is a block of its own, the terminal partition is distinguishable. Otherwise pairs of cells from different merge sets may still fail, and such failures are a property of that terminal partition only.
\end{proposition}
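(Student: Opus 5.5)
The plan is to prove the four claims of Proposition~\ref{sprop:search} in order: termination, determinism, the state at the stopping point, and distinguishability in the mergeable case. All four follow from how the search is defined together with Lemma~\ref{slem:local}.

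\emph{Termination.} Each step of the search replaces two cells by their union. Admissibility is preserved, because a merge is performed only when the pair is mergeable, that is, both cells lie in one merge set (hence in one block) and neither is an unmergeable singleton. A merge lowers the number of cells $K(\pi)$ by exactly one. The search starts at $K=m$ and a partition always has $K\ge 1$, so at most $m-1$ merges can occur, and the loop stops.

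\emph{Determinism.} At each step the set of failing mergeable in-scope pairs is a function of the current partition only. By Lemma~\ref{slem:local}, each separation ratio $r(A,B)$ is computed from the PSU totals of $A$, $B$ and the frame, and none of these depends on the history of the search. The pair chosen is the one with the smallest $r$, with ties broken in a fixed order. Induction on the step count then shows that the whole sequence of partitions is fixed by the data.

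\emph{The stopping state.} The loop continues while some mergeable in-scope pair fails. At termination no such pair exists, which is exactly the claim that every in-scope mergeable pair is distinguishable. If every in-scope pair of cells is mergeable, every in-scope pair is therefore distinguishable, and the terminal partition is distinguishable by the definition. I would check the stated sufficient condition as follows. When the merge sets coincide with the blocks of the scope, any two non-unmergeable cells in the same block lie in one merge set and so are mergeable. When each unmergeable category is a block of its own, no within-block pair involves it. Under scope W every in-scope pair is then mergeable.

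\emph{Pairs across merge sets.} Cells from different merge sets within one block are in scope but are never merged, so the loop never examines them. They may fail at termination. By Lemma~\ref{slem:local} such a failure depends only on the two terminal cells, which is the final remark of the statement.

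The main obstacle is only definitional care. The plan must make precise that "merge only mergeable pairs" keeps admissibility, and that "mergeable" coincides with "in scope" under the stated condition. This requires fixing the meaning of the scope under scope A as well. There every cross-block pair is in scope but not mergeable, so the condition fails unless there is a single block. I would note this explicitly.
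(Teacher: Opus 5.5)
Your proposal is correct and takes the same route as the paper. The paper's proof is just your termination count (each merge removes one cell) plus the observation that, when every in-scope pair is mergeable, the stopping rule is the definition of a distinguishable partition; your checks of admissibility, determinism and the sufficient condition under scope W, and your remark on scope A, only spell out what the paper leaves implicit. The paper gives no argument for the last clause, which most likely cautions that such cross-merge-set failures condemn only the terminal partition, not the existence of a distinguishable one, so your Lemma~S1 locality gloss is true but makes a different point.
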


\begin{proof}
Each merge lowers the number of cells by one. The stopping rule is that no mergeable in-scope pair fails. When
every in-scope pair is mergeable, this is the definition of a distinguishable partition.
\end{proof}

\begin{proposition}[Proposition~\ref{prop:cert} of the main text]\label{sprop:cert}
(a) A maximal partition whose cells have at most $\bar m$ categories is certified. (b) If coarsening never
destroys distinguishability among admissible partitions, every certified partition is maximal. (c) Under scope W,
a certified block factor with at least $K_{\max}-1$ cells is maximal whenever that block's $K_{\max}$ is known
exactly. (d) A certified partition need not be maximal.
\end{proposition}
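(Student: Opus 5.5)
Parts (a) and (b) follow directly from the definitions, (c) from a counting argument, and (d) from an explicit counterexample.

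For (a), let $\pi$ be maximal with every cell of at most $\bar m$ categories. Any partition obtained by splitting one cell of $\pi$ into two is a strict refinement of $\pi$. If it is admissible, it is not distinguishable by maximality. If it is not admissible, it is not distinguishable by definition. So $\pi$ is certified.

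For (b), suppose $\pi$ is certified but not maximal, and let $\sigma$ be a distinguishable strict refinement. Some cell $A$ of $\pi$ is a union of at least two cells of $\sigma$. Split those cells into two nonempty families and let $A=A_1\cup A_2$ be their unions. Let $\pi'$ be $\pi$ with $A$ replaced by $A_1,A_2$. Then $\pi'$ is admissible: $A_1,A_2$ lie in $A$'s block, and unmergeable singletons are untouched. Also $\pi'$ is a coarsening of $\sigma$. Under the hypothesis of (b), $\pi'$ is therefore distinguishable, yet it is a single split of $\pi$. This contradicts certification.

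For (c), I work within one block using Proposition~\ref{sprop:blocks}. Let $\pi$ be a certified factor with $K(\pi)\ge K_{\max}-1$. A distinguishable strict refinement $\sigma$ has $K(\sigma)\le K_{\max}\le K(\pi)+1$ and $K(\sigma)>K(\pi)$, so $K(\sigma)=K(\pi)+1$. A refinement with exactly one more cell arises by splitting one cell of $\pi$ into two. The cap $\bar m$ does not matter here, since certification ruled out every single split of every cell. Hence $\sigma$ is excluded, and $\pi$ is maximal.

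For (d), I need a concrete configuration, which I expect to be the main obstacle: choosing covariances so the required inequalities all hold. I would take one block of four categories under scope W, with the one-cell partition $\{1,2,3,4\}$. It is trivially distinguishable, since it has no pairs, and no cap issue arises with $\bar m=4$. It is certified if every two-cell split fails. The refinement $\{1\}|\{2\}|\{3\}|\{4\}$, or some three-cell partition, should be distinguishable. Choose shares $p=(0.40,0.25,0.10,0.05)$ plus a remainder in the undetermined category, and specify $\hat{\mathbf V}$ directly, justified by Supplement S1's quadratic form. Take small variances so that all singleton pairs separate, with ratios above $c$. Then add strong positive covariances within the pairings of each two-cell split, for example between $1$ and $4$ and between $2$ and $3$. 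These inflate $\widehat{\mathrm{SE}}$ or balance the shares so that each of the seven splits has ratio below $c$. Share balance alone handles $\{1,4\}|\{2,3\}$, because both sides equal $0.45$ minus $0.35$, and the shares can be tuned to make the gap small. For the splits $\{1\}|\{2,3,4\}$ and its analogues, the means nearly balance if $p_1\approx p_2+p_3+p_4$. I would choose shares so that each of the seven two-cell splits has a near-zero gap or is killed by covariance. Then I would verify that $\hat{\mathbf V}$ is positive semidefinite and realizable from PSU totals, for instance by exhibiting explicit PSU totals with $\nu$ small.
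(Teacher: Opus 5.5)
Your parts (a)--(c) are correct and follow the paper's own arguments. In (b) the paper also splits the offending cell $A$ into one part of $\sigma$ and the union of the others, and (c) is the same cell count.

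The gap is (d). It is an existence claim, and you give only a plan. More seriously, the plan's main route cannot work for any positive semidefinite $\hat{\mathbf V}$: the one-cell partition $\{1,2,3,4\}$ cannot be certified while the all-singletons refinement is distinguishable. Write $e_k$ for the $k$th unit vector. Then $\widehat{\mathrm{SE}}(x)=(x^\top\hat{\mathbf V}x)^{1/2}$ is a seminorm, so $\widehat{\mathrm{SE}}(x+y)\le\widehat{\mathrm{SE}}(x)+\widehat{\mathrm{SE}}(y)$. If all singleton pairs pass, the shares are distinct, say $p_{(1)}>p_{(2)}>p_{(3)}>p_{(4)}$. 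The split $\{(1),(2)\}\,|\,\{(3),(4)\}$ has contrast vector $(e_{(1)}-e_{(3)})+(e_{(2)}-e_{(4)})$. Its gap is the sum of those two singleton gaps, and its standard error is below that sum divided by $c$, so its ratio exceeds $c$. The same argument applies to $\{(1),(3)\}\,|\,\{(2),(4)\}$. At least one of the two passes even under the convention that a zero standard error fails: if both had zero standard error, so would $e_{(1)}-e_{(4)}$, which is half the sum of their contrast vectors. With your shares, $\{1,2\}\,|\,\{3,4\}$ (gap $0.30+0.20$) and $\{1,3\}\,|\,\{2,4\}$ (gap $0.15+0.05$) pass whatever covariances you add. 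Also, $\{1,4\}\,|\,\{2,3\}$ is not balanced, at $0.45$ against $0.35$.

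The missing idea is a covariance component that inflates exactly the contrasts created by single splits while vanishing on the refinement's contrasts. The paper takes $\pi=\{a,b\}\,|\,\{c,d\}$, shares $4,5,6,7$ percentage points and $\hat{\mathbf V}=0.2^2\mathbf I+2.8^2\mathbf 1\mathbf 1^\top$, realized from 12 strata of two PSUs. The common component cancels in every contrast between cells of equal size, so all six singleton pairs and $(ab,cd)$ pass. Each single split of $\pi$ instead creates a one-versus-two contrast that absorbs the full $2.8^2$ and fails, with ratios $2.836$ and $1.063$.

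If you want to keep the one-cell $\pi$, your three-cell fallback can work, but you must choose it and aim the covariance. For $\sigma=\{a\}\,|\,\{b\}\,|\,\{c,d\}$, the vectors $u=(2,2,1,1)$ and $w=(0,0,1,-1)$ are orthogonal to all three contrasts of $\sigma$. Each of the seven two-cell splits of $\{a,b,c,d\}$ has inner product $\pm2$ or $\pm4$ with $u$ or $w$. So $\varepsilon\mathbf I+\lambda(uu^\top+ww^\top)$ does it, with small $\varepsilon$, large $\lambda$, and shares making $p_a$, $p_b$, $p_c+p_d$ distinct; it is realized with one stratum per rank-one term. As written, though, (d) is not established.
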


\begin{proof}
(a) Splitting one cell of an admissible partition in two gives a strict admissible refinement, and no strict refinement of a maximal partition is distinguishable.

(b) Let $\pi$ be certified and suppose $\sigma$ is a distinguishable strict refinement. Some cell $A$ of $\pi$ is
split by $\sigma$ into parts $A_1,\dots,A_r$ with $r\ge2$. Let $\pi'$ split $A$ into $A_1$ and
$A_2\cup\dots\cup A_r$. Then $\pi'$ is admissible, is a one-cell split of $\pi$, and is a coarsening of $\sigma$,
hence distinguishable under the assumption of part (b). This contradicts the certificate.

(c) If the factor has $K_{\max}$ cells, a strict refinement would have more than $K_{\max}$ cells and cannot be
distinguishable. If it has $K_{\max}-1$ cells, a distinguishable strict refinement has exactly $K_{\max}$ cells,
one more than the factor, and is therefore a one-cell split, which the certificate excludes. If every block factor of a certified partition meets this condition, the whole partition is maximal by
Proposition \ref{sprop:blocks}(iii).

(d) Take one block of four categories $a,b,c,d$ with estimated shares 4, 5, 6 and 7 percentage points, and let the
estimated covariance matrix of the four share estimators be $0.2^2\mathbf I+2.8^2\mathbf 1\mathbf 1^\top$. It arises from 12 strata of two PSUs each, where the two PSUs of a stratum hold equal shares of the frame. In strata 1--4 the first PSU holds 0.2 percentage points more of $a$, $b$, $c$ and $d$ respectively than the second, in stratum 5 it holds 2.8 points more of each of the four, and the two PSUs of every other stratum are identical. The common component $2.8^2\mathbf 1\mathbf 1^\top$ describes PSUs that differ in how much of the frame falls in the block as a whole. This component is made large, a correlation of 0.99 between the share estimators, since the example only has to show that part (d) can occur. With $c=3.0514$ and any cap $\bar m\ge2$, Table \ref{tab:toy}
shows that $\{a,b\}\,|\,\{c,d\}$ is distinguishable and certified, because both of its one-cell splits fail. Yet
the partition into singletons is distinguishable, so $\{a,b\}\,|\,\{c,d\}$ is not maximal. In this block the
singletons form the only maximal partition.
\end{proof}

\begin{table}[t]
\centering
\caption{The four-category example of Proposition \ref{sprop:cert}(d): separation ratios against $c=3.0514$.}
\label{tab:toy}
\begin{tabular}{llc}
\toprule
Partition & Pairs and separation ratios & Distinguishable \\
\midrule
$\{a,b\}\,|\,\{c,d\}$ & $(ab,cd)$ 10.000 & yes \\
$\{a\}\,|\,\{b\}\,|\,\{c,d\}$ & $(a,b)$ 3.536, $(a,cd)$ 3.190, $(b,cd)$ 2.836 & no \\
$\{a,b\}\,|\,\{c\}\,|\,\{d\}$ & $(ab,c)$ 1.063, $(ab,d)$ 0.709, $(c,d)$ 3.536 & no \\
$\{a\}\,|\,\{b\}\,|\,\{c\}\,|\,\{d\}$ & smallest of six ratios 3.536 & yes \\
\bottomrule
\end{tabular}
\end{table}

\begin{proposition}[Proposition~\ref{prop:nonmono} of the main text]\label{sprop:nonmono}
A distinguishable partition can have an admissible coarsening that is not distinguishable.
\end{proposition}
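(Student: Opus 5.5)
This is an existence statement, so I will prove it by exhibiting one explicit example and checking it. The cheapest route reuses the four-category block of Proposition~\ref{sprop:cert}(d). That block has shares $4,5,6,7$ percentage points and covariance matrix $\hat{\mathbf V}=0.2^2\mathbf I+2.8^2\mathbf 1\mathbf 1^\top$, realized by PSU totals over 12 strata with $\nu=12$, and I take $c=3.0514$. Table~\ref{tab:toy} shows that the all-singletons partition $\{a\}|\{b\}|\{c\}|\{d\}$ is distinguishable, with smallest ratio $3.536$. Its coarsening $\{a,b\}|\{c\}|\{d\}$ merges two cells of one block, so it is admissible, and it contains the pair $(ab,c)$ with ratio $1.063<c$. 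So this coarsening is not distinguishable, which proves the proposition.

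To make this self-contained, I would recompute the two relevant separation ratios directly. By Supplement S1, $\hat v(A,B)=(\mathbf 1_A-\mathbf 1_B)^\top\hat{\mathbf V}(\mathbf 1_A-\mathbf 1_B)$. Write $x=\mathbf 1_A-\mathbf 1_B$.
\begin{itemize}
\item For singletons, $x$ has entries summing to $0$, so the common component vanishes. Then $\hat v=2(0.2)^2=0.08$ and $\widehat{\mathrm{SE}}\approx0.283$. The smallest gap is $1$ point, so every ratio is at least $1/0.283\approx3.536>c$.
\item For $A=\{a,b\}$ and $B=\{c\}$, $x=(1,1,-1,0)$ has entry sum $1$. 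Then $\hat v=3(0.04)+2.8^2=7.96$ and $\widehat{\mathrm{SE}}\approx2.821$.
\item The gap for this pair is $9-6=3$, so the ratio is $3/2.821\approx1.063<c$.
\end{itemize}
This computation exhibits both mechanisms named after Proposition~\ref{prop:nonmono}. The merged share moves toward its comparator: the gap grows, but by less than the SE. The SE explodes because the common, positively correlated component no longer cancels once the contrast vector's entries stop summing to zero.

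The only step needing care is that $\hat{\mathbf V}$ actually arises from legitimate PSU totals through (\ref{eq:var}). Otherwise the example would be a formal covariance matrix rather than a design-based estimate. Proposition~\ref{sprop:cert}(d) already asserts this construction: strata 1--4 each carry a $0.2$-point difference in a single category, stratum 5 carries a $2.8$-point difference in all four, and the remaining strata have identical PSUs. I would verify it briefly. With $n_h=2$, stratum $h$ contributes $\tfrac{2}{1}\cdot 2(d_h/2)^2=d_h^2$, where $d_h$ is the vector of between-PSU differences. Summing $d_hd_h^\top$ over strata gives exactly $0.2^2\mathbf I+2.8^2\mathbf 1\mathbf 1^\top$. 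Equal frame shares make the linearization term $-\hat p_k$ cancel within each stratum, so the result holds.

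As a second, real-data witness, I would cite the NMVCCS vehicle partition of Section~4. It is distinguishable with ratios $4.139, 5.615, 4.288$, and merging its last two cells gives a pair at ratio $2.519<3.0514$.
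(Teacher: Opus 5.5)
Your proof is correct and is essentially the paper's, including your recomputation of the separation ratios and of $\hat{\mathbf V}$ from the PSU differences. Both use the four-category example of Table~\ref{tab:toy} and cite the NMVCCS vehicle instance of Section~4; you exhibit the failing coarsening $\{a,b\}\,|\,\{c\}\,|\,\{d\}$ where the paper uses $\{a\}\,|\,\{b\}\,|\,\{c,d\}$, and both appear in the table as not distinguishable. One remark in your commentary is slightly off: for the pair $(ab,c)$ the gap grows from 1 or 2 to 3, so only the variance mechanism operates there, and the merged share does not move toward its comparator.
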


\begin{proof}
In Table \ref{tab:toy} the singletons are distinguishable while their coarsening $\{a\}\,|\,\{b\}\,|\,\{c,d\}$ is
not. Section 4 of the main text gives an instance in survey data.
\end{proof}

Merging can lower a separation ratio in two ways. For a cell $A$ compared with a cell $B$, merging a cell $C$ into $B$ shifts the gap by the share of $C$, since $\hat\Delta(A,B\cup C)=\hat\Delta(A,B)-\hat p_C$. The variance changes to $\hat v(A,B)+\hat v_C-2\,\widehat{\mathrm{cov}}(\hat p_A-\hat p_B,\hat p_C)$, where $\hat v_C$ is the variance of $\hat p_C$. The variance of the contrast with $B\cup C$ rises whenever the share of $C$ co-varies with $B$ at least as much as with $A$. In the example, the common component cancels in contrasts between cells with the same number of categories, such as $(ab,cd)$, and adds to the variance of contrasts between cells with different numbers of categories, such as $(ab,c)$ and $(b,cd)$, both of which fail.

\begin{proposition}[maximal partitions]\label{sprop:antichain}
(i) Two distinct maximal partitions never refine one another. (ii) In a block whose
$K_{\max}=K^\ast$ is known exactly, a distinguishable partition with $K^\ast-1$ cells is maximal unless it is a
one-step coarsening of a distinguishable $K^\ast$-cell partition. Each $K^\ast$-cell partition has
$\binom{K^\ast}{2}$ one-step coarsenings, so at least $N_{K^\ast-1}-\binom{K^\ast}{2}N_{K^\ast}$ of the $N_{K^\ast-1}$
distinguishable $(K^\ast-1)$-cell partitions are maximal, where $N_{K^\ast}$ is the number of distinguishable $K^\ast$-cell partitions.
\end{proposition}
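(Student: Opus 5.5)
Part (i) follows directly from the definition of maximality. Suppose $\pi$ and $\sigma$ are distinct maximal partitions and $\pi$ refines $\sigma$. Since they are distinct, $\pi$ is a strict refinement of $\sigma$. It is distinguishable, because it is maximal and maximal partitions are distinguishable by definition. That contradicts the maximality of $\sigma$. The same argument applies with the roles exchanged, so neither partition refines the other.

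For part (ii), fix the block and let $\pi$ be a distinguishable partition of it with $K^\ast-1$ cells. Suppose $\pi$ is not maximal. Then some distinguishable strict refinement $\sigma$ exists. A strict refinement has strictly more cells, so $K(\sigma)\ge K^\ast$. Since $K^\ast$ is the exact maximum over distinguishable partitions of the block, $K(\sigma)=K^\ast$. A refinement with exactly one more cell arises from splitting exactly one cell of $\pi$ in two. To see this, count how many cells of $\sigma$ each cell of $\pi$ contains: the counts are all at least one and sum to $K^\ast$ over $K^\ast-1$ cells, so exactly one cell contains two parts. Equivalently, $\pi$ is obtained from $\sigma$ by merging two of its cells, so $\pi$ is a one-step coarsening of the distinguishable $K^\ast$-cell partition $\sigma$. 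This gives the contrapositive of the first claim in (ii). It is the same counting step used in Proposition~\ref{sprop:cert}(c).

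For the bound, note that a $K^\ast$-cell partition has exactly $\binom{K^\ast}{2}$ one-step coarsenings, one for each unordered pair of its cells to merge. Distinct pairs give distinct coarsenings, though this is not needed for an upper bound. So the set of $(K^\ast-1)$-cell partitions that are one-step coarsenings of some distinguishable $K^\ast$-cell partition has size at most $\binom{K^\ast}{2}N_{K^\ast}$, by a union bound. Some of these coarsenings may not even be distinguishable (Proposition~\ref{sprop:nonmono}), and some may be counted more than once; both facts only help the bound. Every distinguishable $(K^\ast-1)$-cell partition outside this set is maximal by the first claim. Hence at least $N_{K^\ast-1}-\binom{K^\ast}{2}N_{K^\ast}$ of them are maximal.

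The only step needing care is the counting argument that a refinement with one more cell is a single split. Admissibility raises no issue here: $\sigma$ is admissible by assumption, so no hypothesis about admissibility of intermediate partitions is needed.
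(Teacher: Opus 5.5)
Your proof is correct and follows the paper's argument. In (i), a strict refinement by another maximal, hence distinguishable, partition contradicts maximality; in (ii), a distinguishable strict refinement must have exactly $K^\ast$ cells, so it is a single split, and a union bound over the $\binom{K^\ast}{2}N_{K^\ast}$ one-step coarsenings gives the count. Your pigeonhole argument that one extra cell means exactly one split just spells out a step the paper states without justification.
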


\begin{proof}
(i) If maximal $\pi\ne\sigma$ and $\pi$ refines $\sigma$, then $\sigma$ has a distinguishable strict refinement, which contradicts the assumption that $\sigma$ is maximal.
(ii) A strict refinement of a partition with $K^\ast-1$ cells has at least $K^\ast$ cells. If it is distinguishable it
has exactly $K^\ast$, so it arises by splitting one cell in two, and the partition is one of its one-step
coarsenings. Counting those coarsenings gives the bound.
\end{proof}

Maximal partitions with different numbers of cells coexist whether or not coarsening can destroy distinguishability, so they are not caused by this loss of distinguishability. Because coarsening can destroy distinguishability, a certified partition need not be maximal, $K_{\max}$ cannot be inferred from the absence of distinguishable partitions with $K_{\max}+1$ cells alone, and a coarsening of a distinguishable partition has to be checked in its own right.

\begin{proposition}[singletons]\label{sprop:single}
If category $k$ is a singleton in a distinguishable partition, it is a singleton in every refinement of it, and
every distinguishable partition has at least one maximal refinement. Therefore, for any collection of
distinguishable partitions, the categories merged in every member include the categories merged in every maximal
partition.
\end{proposition}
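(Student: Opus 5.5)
The proposition has three claims, and I would prove them in order: the singleton claim, then existence of a maximal refinement, then the statement about merged categories.

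\emph{Singletons.} Suppose $\pi$ is distinguishable, $k$ is a singleton of $\pi$, and $\sigma$ refines $\pi$. Each cell of $\sigma$ lies inside a cell of $\pi$. So the cell of $\sigma$ containing $k$ is a nonempty subset of $\{k\}$, which forces it to be $\{k\}$. This step uses nothing about distinguishability; the hypothesis that $\pi$ is distinguishable only matters for the later claims.

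\emph{Existence of a maximal refinement.} Let $\pi$ be distinguishable. The set $R(\pi)$ of distinguishable refinements of $\pi$ contains $\pi$, so it is nonempty. It is also finite, since $C$ has finitely many partitions. Choose $\rho \in R(\pi)$ with the largest number of cells $K(\rho)$. Suppose some distinguishable $\tau$ strictly refines $\rho$. Refinement is transitive, so $\tau$ refines $\pi$ and lies in $R(\pi)$. A strict refinement has strictly more cells, so $K(\tau) > K(\rho)$, which contradicts the choice of $\rho$. Hence $\rho$ is maximal. A strict refinement is automatically admissible: its cells lie inside cells of $\pi$, so they stay within blocks, and unmergeable categories stay singletons by the first step. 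Admissibility therefore causes no trouble here.

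\emph{Merged categories.} This is the step where the main difficulty lies: the phrase ``the categories merged in every member include the categories merged in every maximal partition'' needs a precise reading. Call $k$ \emph{merged} in $\pi$ when its cell has at least two categories. The reading I can prove by contraposition is this. Let $\mathcal P$ be a collection of distinguishable partitions, and suppose $k$ is merged in every maximal partition. Then $k$ is merged in every member of $\mathcal P$.

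To see this, suppose $k$ were a singleton in some $\pi \in \mathcal P$. By the second step, $\pi$ has a maximal refinement $\rho$. By the first step, $k$ is a singleton in $\rho$. That contradicts the assumption that $k$ is merged in every maximal partition.

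So the set of categories merged in all maximal partitions is contained in the set merged in all members of $\mathcal P$, which is the stated inclusion. If $\mathcal P$ is empty, the intersection over $\mathcal P$ is conventionally all of $C$, and the inclusion holds trivially. I would spell out this reading explicitly in the proof, so the quantifiers are clear: the intersection over the collection contains the intersection over the maximal partitions.
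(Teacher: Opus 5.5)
Your proof is correct and follows the paper's argument: a singleton cannot be split, the finite nonempty set of distinguishable refinements has a maximal element, and a category that is a singleton in some member stays a singleton in that member's maximal refinement. Your added details make explicit what the paper's short proof leaves implicit: choosing the refinement with the most cells, using transitivity so that it is maximal among all partitions and not only within $R(\pi)$, and noting that refinements stay admissible.
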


\begin{proof}
A refinement only splits cells, and a singleton cannot be split. The distinguishable refinements of a partition
form a finite nonempty set, which has a maximal element under refinement. If $k$ is a singleton in some member,
it is a singleton in that member's maximal refinement.
\end{proof}

Unless the members of a collection are themselves maximal, the categories that are singletons in every member need not include the categories that are singletons in every maximal partition.

\begin{proposition}[Proposition~\ref{prop:enum} of the main text]\label{sprop:enum}
The walk of Section 3.2 visits every admissible partition of a block exactly once, and the walk that abandons a branch at its first indistinguishable pair returns exactly the block's distinguishable partitions under scope W.
\end{proposition}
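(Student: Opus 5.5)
The plan splits Proposition~\ref{sprop:enum} into two claims and proves them in order. The first is that the unpruned walk visits every admissible partition of the block exactly once. The second is that the pruned walk returns exactly the distinguishable ones.

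For the first claim, I use the canonical form of a set partition of a block $G=\{g_1<\dots<g_b\}$. Order the cells of any partition by their smallest elements. The first cell contains $g_1$. Each later cell contains the smallest category not covered by the earlier cells. The walk's branches correspond one to one with such ordered lists. At each step it picks a cell containing the lowest uncovered category, chosen among the remaining categories and required to be admissible.

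\emph{Existence.} For any admissible $\pi$, take at step $j$ the cell of $\pi$ containing the current lowest uncovered category. That cell is admissible: it lies in the block, and it is a singleton if the category is unmergeable. It also avoids the already covered categories, so the walk can choose it. By induction on the number of cells, $\pi$ is reached.

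\emph{Uniqueness.} Two distinct branches first differ at some step $j$. There the same lowest category is placed in different cells. The two resulting partitions therefore differ in the cell containing that category.

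Termination is immediate, since each step covers at least one new category.

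For the second claim, one direction is easy. Every leaf returned by the pruned walk is a complete admissible partition. The walk checks every pair of cells as each new cell is added, so all within-block pairs have passed the criterion and the partition is distinguishable under scope W. In the other direction, I take a distinguishable $\pi$ and follow the unique branch that produces it. Consider any prefix of that branch. By Lemma~\ref{slem:local}, each pair of cells in the prefix has the same separation ratio it has in $\pi$. Every such pair is within the block and hence in scope, so each pair passes. The branch is therefore never abandoned and $\pi$ is returned. Together with uniqueness, this shows the output is exactly the set of distinguishable partitions, each listed once.

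The main obstacle is being precise about when the pair tests happen. If the rule tests each new cell against all earlier cells, then ``first indistinguishable pair'' is well defined. The point to secure is that a failing pair inside a prefix stays a pair of cells in every completion of that prefix. This holds because later steps only add cells and never change earlier ones. Lemma~\ref{slem:local} then makes the failure permanent, which is what makes pruning lossless. I would also note that the precomputed $2^b$ cell totals give the same contrasts as recomputing from PSU totals. This keeps the argument valid for the implemented algorithm.
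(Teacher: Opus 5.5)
Your proof is correct and follows essentially the same route as the paper's. Each admissible partition arises along the unique path that always picks the cell containing $\min U$, and distinct paths first differ in the cell given to the same category; pruning is lossless because chosen cells are never altered and Lemma~\ref{slem:local} makes a failing pair fail in every completion, while every output had each of its pairs tested when the later cell was chosen (your ``the branch of a distinguishable $\pi$ is never abandoned'' is just the contrapositive of the paper's phrasing).
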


\begin{proof}
At each step the walk holds a set $U$ of categories not yet in a cell and extends the partial partition by a new admissible cell
$S\subseteq U$ with $\min U\in S$. Every admissible partition of the block arises along exactly one path. At each step, the only choice consistent with a given partition is that partition's cell containing $\min U$. Where two paths first differ, they choose different cells for the same category, so they end in different partitions. Cells already chosen are never
altered. If two chosen cells fail, both remain cells of every completion, and by Lemma \ref{slem:local} the pair fails in each of them. Since scope W requires that pair to be distinguishable, abandoning the branch removes only partitions that are not distinguishable.
Conversely, each pair of cells of an output was tested when the later of the two was chosen, so every output is
distinguishable.
\end{proof}

The cost is governed by the $2^b$ cell totals, computed once, and by the number of pair tests the walk makes. In the illustration the walks made 2{,}245{,}892 and 81{,}303{,}674 pair tests for the vehicle and environment blocks, which have 27{,}644{,}437 and 10{,}480{,}142{,}147 partitions.

\begin{proposition}[Proposition~\ref{prop:orphan} of the main text]\label{sprop:orphan}
Let $F$ be an unmergeable category. Under scope A, if some other category $k$ lies in no admissible cell that
is distinguishable from $\{F\}$, no admissible partition is distinguishable.
\end{proposition}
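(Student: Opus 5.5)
The plan is a direct argument by contradiction that uses only the definition of admissibility, the definition of scope A, and Lemma~\ref{slem:local}. Suppose some admissible partition $\pi$ of $C$ were distinguishable under scope A.

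First, because $F$ is unmergeable, admissibility forces $\{F\}$ to be a cell of $\pi$. Second, the category $k\neq F$ lies in exactly one cell $A$ of $\pi$. Then $A\neq\{F\}$, so $A$ and $\{F\}$ are disjoint and their contrast is defined. Third, $A$ is itself an admissible cell: it lies inside $k$'s block, and it satisfies whatever merge restrictions the partition satisfies. If $k$ is itself unmergeable, $A=\{k\}$, which is still an admissible cell containing $k$.

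Scope A requires every pair of cells of $\pi$ to be distinguishable, in particular the pair $(A,\{F\})$. By Lemma~\ref{slem:local}, whether this pair is distinguishable depends on $A$ and $\{F\}$ alone, not on the rest of $\pi$. So $A$ would be an admissible cell containing $k$ that is distinguishable from $\{F\}$, contradicting the hypothesis that $k$ is an orphan. Hence no admissible partition is distinguishable.

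The only step needing care is matching the hypothesis's phrase ``admissible cell'' with the cells that can actually occur in admissible partitions. I would fix the meaning explicitly: a cell is admissible when it lies in one block, stays within one merge set if merge sets are used, and contains an unmergeable category only as a singleton. Under that reading, every cell of an admissible partition is an admissible cell, which is all the argument uses. I would also remark that the converse direction is not claimed, and that the scan over subsets of $k$'s block containing $k$ is exactly the finite check of the hypothesis.
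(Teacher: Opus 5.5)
Your proposal is correct and is essentially the paper's proof: every admissible partition contains $\{F\}$ and some cell $S\ni k$ lying in $k$'s block, scope A requires $S$ to be distinguishable from $\{F\}$, and by hypothesis no such $S$ exists. Your appeal to Lemma~\ref{slem:local} and your explicit definition of ``admissible cell'' are harmless additions that the paper leaves implicit, since distinguishability is already defined pairwise on cells.
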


\begin{proof}
Every admissible partition has $\{F\}$ as a cell and some cell $S\ni k$ with $S$ inside $k$'s block. Scope A
requires $S$ to be distinguishable from $\{F\}$, and no such $S$ exists.
\end{proof}

Every category other than $F$ must lie in some admissible cell distinguishable from $\{F\}$ for a partition to be distinguishable under scope A, but this condition is not sufficient. The scan can therefore only rule partitions out, which is all that Proposition~\ref{sprop:orphan} claims. For one category $k$ it tests the $2^{b-1}$ subsets of the block that contain $k$, and checking every category of the block at once tests all $2^b-1$ nonempty subsets.

\begin{proposition}[a lower bound on $K_{\max}$]\label{sprop:witness}
An explicit partition with $K$ cells, verified distinguishable pair by pair from the PSU totals, shows
$K_{\max}\ge K$ without any search.
\end{proposition}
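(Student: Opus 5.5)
The claim follows straight from the definition of $K_{\max}$ as the largest number of cells in a distinguishable partition. The plan is to check that the exhibited partition belongs to the set over which this maximum is taken, and then read off the bound.

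First, I will confirm that the explicit $K$-cell partition $\pi$ is admissible. This means checking two things: each cell lies inside one block (and inside one merge set, where merge sets are finer than blocks), and every unmergeable category, such as the undetermined code, is a singleton.

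Next, I will verify distinguishability pair by pair. For each in-scope pair of cells $(A,B)$ of $\pi$ (within-block pairs under scope W, all pairs under scope A), I will do the following:
\begin{itemize}
\item form the linearized variable $z_i$ of Supplement S1 from the indicators of $A$ and $B$ and the frame total;
\item compute its PSU totals and the variance $\hat v(A,B)$ by (\ref{eq:var});
\item check that $\widehat{\mathrm{SE}}(A,B)>0$ and $r(A,B)>c$.
\end{itemize}
By Lemma~\ref{slem:local}, each such check depends only on $A$, $B$ and the frame's PSU totals. So the finitely many checks together establish that $\pi$ is distinguishable, with no reference to any other partition. In particular, no search over the Bell-number-sized space of partitions is needed.

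Finally, since $\pi$ is a distinguishable admissible partition with $K(\pi)=K$, the maximum defining $K_{\max}$ is at least $K$. Under scope W the same argument applies to a single block's factor, because Proposition~\ref{sprop:blocks}(i) reduces distinguishability of the factor to its own within-block pairs; this gives that block's $K_{\max}\ge K$.

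There is no real obstacle here. The only point needing care is to recompute each merged cell's variance from the PSU totals, so that covariances are included, rather than assembling it from members' variances. Without this, the verification would not certify criterion (\ref{eq:mdd}).
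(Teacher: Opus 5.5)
Your proposal is correct and matches the paper, which gives no separate proof because the bound follows at once from defining $K_{\max}$ as the largest number of cells in a distinguishable (hence admissible) partition. Your admissibility check, the pair-by-pair verification with $\widehat{\mathrm{SE}}>0$ and $r>c$, and the per-block version under scope W are exactly what that observation contains.
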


\section{Algorithms}

\noindent\textbf{Algorithm 1 (search).} Input: categories, blocks, criterion cut $c$, scope.
\begin{enumerate}
\item Start from the all-singletons partition, with every unmergeable category a singleton.
\item Compute $r(A,B)$ from the PSU totals for every in-scope pair of cells that could be merged.
\item If no such pair fails the criterion, stop. Otherwise merge the failing pair with the smallest $r$, reading $r$ as
0 when $\widehat{\mathrm{SE}}=0$ and breaking ties in a fixed order. Then recompute every contrast that involves the new cell, and repeat this step.
\end{enumerate}

\noindent\textbf{Algorithm 2 (certificate, cap $\bar m$).}
\begin{enumerate}
\item For each cell $A$ of a distinguishable partition with $2\le|A|\le\bar m$, and each of its $2^{|A|-1}-1$ splits into two nonempty parts, test
whether the partition with $A$ replaced by the two parts is distinguishable.
\item If some split passes, apply the first one in a fixed order and restart step 1. The loop ends because each
applied split strictly refines the partition.
\item If a cell has more than $\bar m$ categories, report the partition as uncertified and do not count it as
maximal.
\end{enumerate}

\noindent\textbf{Algorithm 3 (listing every distinguishable partition of a block).}
\begin{enumerate}
\item Compute the PSU totals of all $2^b$ subsets of the block.
\item Walk$(\text{cells},U)$: if $U$ is empty, output the cells. Otherwise let $k=\min U$ and, for every admissible
$S\subseteq U$ with $k\in S$, test $S$ against each chosen cell. If all tests pass, call
Walk$(\text{cells}\cup\{S\},U\setminus S)$.
\item Call Walk$(\emptyset,\text{block})$. The outputs are the block's distinguishable partitions, from which $K_{\max}$,
the counts by number of cells and the maximal partitions follow.
\end{enumerate}

\noindent\textbf{Algorithm 4 (orphan scan for scope A).} For each block and each admissible cell $S\ne\{F\}$ in
it, test $S$ against the unmergeable cell $\{F\}$. Mark every category that lies in a passing subset. Any unmarked
category other than $F$ is an orphan, and one orphan shows that no admissible partition is distinguishable under scope A.

\section{The Illustration in Detail}

\paragraph{Design.} The survey's qualification rules and field period are NHTSA's own \citep{NHTSA2008a}, and the counts come from its analytical user's manual \citep{NHTSA2008b}. Of the 6{,}949 investigated crashes, 5{,}470 carry positive weight
and represent 2{,}188{,}970 crashes from July 3, 2005, to December 31, 2007. The 12 strata hold two PSUs each
\citep{Choi2008}, so $\nu=12$. The critical reason is recorded per vehicle. The code saying that a vehicle does not carry the crash's critical reason, and a placeholder code, are not crash-level categories. When two vehicles of a crash carry the critical reason, the lower-numbered vehicle's code is used. \citet[Sections 6, 9.2 and 10.1]{Choi2008} document the time blocks sampled and monitored (each a period of the day on one day of the week), how response varied across PSUs, and how the design weights are adjusted for crashes missed because a researcher missed the notification or arrived after the scene had been cleared.

\paragraph{Taxonomy.} The 59 categories observed at crash level form four blocks: driver codes
100--199 (29 observed), vehicle codes 200--299 (13), environment codes 500--527
(16), and the undetermined code 9999 \citep{NHTSA2008b}, which is never merged. Code numbers are labels, so any subset of a
block is an admissible cell.

\paragraph{Criteria.} The quantiles $t_{12,0.975}=2.1788$ and $t_{12,0.80}=0.8726$ give $c_1=3.0514$ at power 0.80, and with $t_{12,0.90}$ in place of $t_{12,0.80}$ the cut at power 0.90 is $c_1=3.5350$. D2 uses $c_2=2.1788$, and D3, with $q=0.05$, uses $c_3=2.436$ at the cell level.

\paragraph{The counterexample.} The vehicle partition shown in Section 4 of the main text is
\[
\{200, 202, 204, 205, 210, 211, 212, 213, 299\}\;|\;\{201, 206, 208\}\;|\;\{203\},
\]
with separation ratios 4.139, 5.615 and 4.288. Merging the last two cells gives
\[
\{200, 202, 204, 205, 210, 211, 212, 213, 299\}\;|\;\{201, 203, 206, 208\},
\]
whose one pair separates at 2.519.

\paragraph{The orphan scan.} Every criterion leaves at least one orphan in each of the vehicle and environment blocks, and thus no admissible partition is distinguishable under scope A for any criterion (Table \ref{tab:orphan} of the main text). The driver block, with $7.13\times 10^{22}$ partitions and $2^{29}$ subsets, is never
scanned, because one orphan in a smaller block is already enough.

\end{document}